\providecommand{\U}[1]{\protect\rule{.1in}{.1in}}
\newtheorem{theorem}{Theorem}
\newtheorem{corollary}[theorem]{Corollary}
\newtheorem{definition}[theorem]{Definition}
\newtheorem{example}[theorem]{Example}
\newtheorem{proposition}[theorem]{Proposition}
\newtheorem{remark}[theorem]{Remark}
\newenvironment{proof}[1][Proof]{\noindent\textbf{#1.} }{\ \rule{0.5em}{0.5em}}
\begin{document}

\title{On a recent conjecture by Z. Van Herstraeten and N.J. Cerf for the quantum Wigner entropy}
\author{Nuno Costa Dias\textbf{\thanks{ncdias@meo.pt}}
\and Jo\~{a}o Nuno Prata\textbf{\thanks{joao.prata@mail.telepac.pt }}}
\maketitle

\begin{abstract}
We address a recent conjecture stated by Z. Van Herstraeten and N.J. Cerf. They claim that the Shannon entropy for positive Wigner functions is bounded below by a positive constant, which can be attained only by Gaussian pure states. We introduce an alternative definition of entropy for all absolutely integrable Wigner functions, which is the Shannon entropy for positive Wigner functions. Moreover, we are able to prove, in arbitrary dimension, that this entropy is indeed bounded below by a positive constant, which is not very distant from the constant suggested by Van Herstraeten and Cerf. We also prove an analogous result for another conjecture stated by the same authors for the R\'enyi entropy of positive Wigner functions. As a by-product we prove a new inequality for the radar-ambiguity function (and for the Wigner distribution) which is reminiscent of Lieb's inequalities.   

\end{abstract}

\section{Introduction}

In \cite{Cerf1} the authors introduced the concept of Wigner quantum entropy. This entropy amounts to Shannon's entropy associated with a Wigner distribution. Recall the Wigner distribution of some $f \in L^2 (\mathbb{R}^d)$ is given by:
\begin{equation}
Wf (x,p)= \frac{1}{(2 \pi\hbar)^d} \int_{\mathbb{R}^d} f(x+y/2) \overline{f (x-y/2)} e^{- \frac{i}{\hbar}y \cdot p} dy~.
\label{eqIntro1}
\end{equation}
Here $x,p \in \mathbb{R}^d$ denote the position and the momentum of a particle or the conjugate quadrature components of $d$ modes of the electromagnetic field, $h=2 \pi \hbar$ is Planck's constant, and $\overline{c}$ denotes the complex conjugation of $c \in \mathbb{C}$. 

Consider a density matrix (a mixed state) $\widehat{\rho}$, that is a positive trace class operator with unit trace, acting on $L^2(\mathbb{R}^d)$. By the spectral theorem, it can always be written as \cite{Teofanov,Gosson1}:
\begin{equation}
\widehat{\rho} = \sum_j p_j |f_j \rangle \langle f_j|~,
\label{eqIntro2}
\end{equation}
with $p_j \geq 0$, $\sum_j p_j=1$ and states $f_j(x)= \langle x|f_j \rangle \in L^2 (\mathbb{R}^d)$, such that $\langle f_j|f_k \rangle = \delta_{j,k}$. The associated Wigner distribution is:
\begin{equation}
W \rho (x,p)= \frac{1}{(2 \pi\hbar)^d} \sum_j p_j \int_{\mathbb{R}^d} f_j (x+y/2) \overline{f_j (x-y/2)} e^{- \frac{i}{\hbar}y \cdot p} dy~.
\label{eqIntro3}
\end{equation} 
In this fashion we obtain a normalized function:
\begin{equation}
\int_{\mathbb{R}^d} \int_{\mathbb{R}^d} W \rho (x,p) ~dx dp =1~.
\label{eqIntro3A}
\end{equation} 
The Shannon entropy of $W \rho$ is defined by \cite{Cerf1}:
\begin{equation}
H \left[W \rho \right] = - \int_{\mathbb{R}^d}\int_{\mathbb{R}^d} W \rho(x,p) \ln \left( W \rho (x,p)\right) dx dp ~.
\label{eqIntro4}
\end{equation} 
However, one faces an immediate problem with this definition. Although Wigner functions share many properties with true probability densities, they fail to qualify as full-fledged probability densities because of their lack of positivity. For instance, the Wigner function of a pure state (\ref{eqIntro1}) is everywhere non-negative if and only if the state is a generalized Gaussian. This is known as Hudson's theorem \cite{Hudson}. For mixed states the situation is much more intricate \cite{Werner,Mandilara,Narcowich} and there is no known general prescription for obtaining the entire set of positive Wigner distributions. The authors of \cite{Cerf1} therefore chose to restrict the definition of \textit{quantum Wigner entropy} (eq.(\ref{eqIntro4})) to the set of positive Wigner distributions only. They called this set the \textit{Wigner-positive states} \cite{Cerf1,Cerf2}.

They then went on and argued that it is nevertheless a proper measure of quantum uncertainty in phase space and they advocate that it can be an important physical quantity in the context of quantum optics, because it is invariant under affine symplectic transformations (displacements, rotations and squeezing), and it permits one to establish a Wigner entropy-power inequality. It bears information about the joint uncertainty of the marginal distributions of Wigner-positive states and, unlike the Wehrl entropy, it is not the classical entropy of a heterodyne detection.

Moreover they conjecture that the functional (\ref{eqIntro4}) is bounded below within the convex set of Wigner-positive states and that, in $d=1$, the lower bound is $1 +\ln(\pi \hbar)=\ln(e \pi \hbar)$, which is attained by pure state Gaussians. The authors proved the conjecture to be true for the subset of Gaussian Wigner functions and for the subset of the passive phase-invariant states (that is, states from which no work can be extracted through unitary operations \cite{Pusz}, and which have "radial" Wigner functions) of the harmonic oscillator. The latter case is only adapted to dimension $d=1$.

If the conjecture is true for all Wigner-positive states, then it implies both the Hirschman, Beckner, Bialynicki-Birula, Mycielski uncertainty principle \cite{Beckner,Birula,Hirschman} for the marginal distributions, as well as the Wehrl entropy bound \cite{Lieb2,Wehrl}, but it is a stronger condition.

In Appendix A of \cite{Cerf1} the authors also consider the R\'enyi entropy of Wigner distributions,
\begin{equation}
h_{\alpha} \left[W \rho \right] = \frac{\alpha}{1-\alpha} \ln \left(\|W \rho \|_{L^{\alpha} (\mathbb{R}^{2d})}\right)~,
\label{eqExtra1}
\end{equation}
for $1 < \alpha \leq \infty$. They also conjectured that, for positive Wigner functions, this entropy is bounded below and that the minimum can only be attained by Gaussian pure states.

The purpose of this paper is to address these conjectures. We were not able to prove the conjectures, but we were nevertheless able to prove some partial results in that direction, and extend them to a larger class of Wigner functions:

\begin{itemize}
\item We introduce a new functional acting on $W \rho$, denoted by $S\left[W \rho\right]$, which amounts to the Shannon entropy of $|W \rho|$ (up to a normalization constant). This functional is well defined for all Wigner distributions in $L^1 (\mathbb{R}^{2d})$, and, if $W \rho$ is a Wigner-positive state, it coincides with the quantum Wigner entropy (\ref{eqIntro4}) of Van Herstraeten and N.J. Cerf (see Definition \ref{DefinitionWignerEntropy} below). In the same vein we define the Wigner-R\'enyi entropy for arbitrary absolutely integrable Wigner functions (Definition \ref{DefinitionRenyi}).

\item We prove (in arbitrary dimension $d$) that $S\left[W \rho\right] > d \ln (2 \pi \hbar)$. We will argue that the constant cannot be attained by any Wigner function in $L^1 (\mathbb{R}^{2d})$. On the other hand, it is not very different from the conjectured constant (see Theorem \ref{TheoremEntropy}).

\item In Appendix A of \cite{Cerf1} it was conjectured that for the R\'enyi entropy of positive Wigner functions in $d=1$, we have:
\begin{equation}
h_{\alpha } \left[W \rho \right] \geq \ln (\pi \hbar)+\frac{\alpha}{\alpha -1}~,
\label{eqExtra2}
\end{equation}
and that the lower bound can only be attained by Gaussian pure states.

We prove (again in arbitrary dimension $d$) that their conjecture is true if $2 \leq \alpha \leq \infty$. For $1 < \alpha <2$, we prove that the R\'enyi entropy is bounded below but we do not obtain the optimal constant (see Theorem \ref{TheoremRenyi}). We should emphasize that all the results obtained for the Wigner-Renyi entropy $h_{\alpha}$ are valid for all Wigner functions belonging to $L^1 (\mathbb{R}^{2d})$, regardless of being positive or not. 

\item In order to prove the previous results, we derive a new inequality for the radar-ambiguity function\footnote{For the definition of $A(f,g)$ see section 2.} $A(f,g)$ (which is also valid \textit{mutatis mutandis} for the Wigner distribution $W(f,g)$). Lieb \cite{Lieb1} proved a certain inequality for the radar-ambiguity function which holds for $L^q$ norms with $q > 2$ and certain values of $p$ (to be specified later):
\begin{equation}
\|A(f,g)\|_{L^q(\mathbb{R}^{2d})} \leq C \|f\|_{L^p(\mathbb{R}^d)}\|g\|_{L^{p^{\prime}}(\mathbb{R}^d)}~,
\label{LiebIntro1}
\end{equation}
and an opposite inequality for $1 \leq q <2$:
\begin{equation}
\|A(f,g)\|_{L^q(\mathbb{R}^{2d})} \geq C \|f\|_{L^p(\mathbb{R}^d)}\|g\|_{L^{p^{\prime}}(\mathbb{R}^d)}~.
\label{LiebIntro2}
\end{equation}
Our inequality is valid for $1 < q < 2$, but the inequality is opposite to that in (\ref{LiebIntro2}) and requires the appurtenance of $A(f,g)$ to $L^1 (\mathbb{R}^{2d})$ (see Proposition \ref{PropositionNewIneq}):
\begin{equation}
\|A(f,g)\|_{L^q(\mathbb{R}^{2d})} \leq C \|A(f,g)\|_{L^1(\mathbb{R}^{2d})}^{\theta /q} \left(\|f\|_{L^p(\mathbb{R}^d)}\|g\|_{L^{p^{\prime}}(\mathbb{R}^d)} \right)^{1- \theta /q}~,
\label{LiebIntro3}
\end{equation}
where $\theta$ is a certain parameter to be specified later.

Here and henceforth, $p^{\prime}$ denotes the H\"older conjugate index of $p \in \left[1, \infty \right]$:
\begin{equation}
\frac{1}{p} + \frac{1}{p^{\prime}}=1~.
\label{eqPreliminary3B}
\end{equation}
The positive constants $C$ in (\ref{LiebIntro1},\ref{LiebIntro2},\ref{LiebIntro3}) can only depend on $q,p,d,\theta$.

\end{itemize}

\section*{Notation}

Let $\mathcal{S}^{\prime} (\mathbb{R}^d)$ be the space of tempered distributions, which is the dual of the space of Schwartz space of test functions $\mathcal{S} (\mathbb{R}^d)$. Given some open set $\Omega \subset \mathbb{R}^d$, we denote  by $\mathcal{D} (\Omega)$ the space of $C^{\infty}$ test functions with compact support contained in $\Omega$. For $q \in \left[1, \infty \right]$, we denote by $L^q (\mathbb{R}^d)$ the space of all $f \in \mathcal{S}^{\prime} (\mathbb{R}^d)$ for which $\|f \|_{L^q (\mathbb{R}^d)} < \infty$. If $1 \leq q < \infty$, the norm $\|\cdot \|_{L^q (\mathbb{R}^d)}$ is given by:
\begin{equation}
\|f\|_{L^q (\mathbb{R}^d)}= \left(\int_{\mathbb{R}^d} |f(x)| ^q dx \right)^{\frac{1}{q}}~.
\label{eqNotation1}
\end{equation}
Alternatively, if $q=\infty$, then:
\begin{equation}
\|f\|_{L^{\infty} (\mathbb{R}^d)}= \text{ess sup}_{x \in \mathbb{R}^d} |f(x)|~.
\label{eqNotation2}
\end{equation}
If the dimension $d$ is clear from the context, then we will simply write $\|\cdot\|_q$, instead of $\|\cdot\|_{L^q (\mathbb{R}^d)}$.

The same is valid for the inner product, which we may write $\langle \cdot | \cdot \rangle_{L^2 (\mathbb{R}^d)}$ or simply $\langle \cdot | \cdot \rangle$. Let us remark that we shall use the physicists' convention, where the inner product is linear in the second argument and anti-linear in the first: 
\begin{equation}
\langle f | g \rangle:= \int_{\mathbb{R}^d} \overline{f(x)} g(x) dx~.
\label{eqNotation3}
\end{equation}

We denote by $\mathcal{F}_{\hbar}$ the Fourier transform 
\begin{equation}
\left(\mathcal{F}_{\hbar} f \right)(p):= \frac{1}{(2 \pi \hbar)^{d/2}} \int_{\mathbb{R}^d} f(x) e^{- \frac{i}{\hbar} x \cdot p} dx ~,
\label{eqNotation4}
\end{equation}
for $f \in \mathcal{S} (\mathbb{R}^d)$. It extends by usual density arguments to $L^2 (\mathbb{R}^d)$.

A symplectic matrix $S \in Sp(2d; \mathbb{R})$ is a $2d \times 2d$ matrix, such that:
\begin{equation}
SJ S^T =J ~,
\label{eqNotation5}
\end{equation}
where $S^T$ denotes the transpose of $S$ and $J$ is the standard symplectic matrix:
\begin{equation}
 J= \left(
 \begin{array}{c c}
 0_{d \times d} & I_{d \times d}\\
 -I_{d \times d} & 0_{d \times d}
 \end{array}
 \right)~.
\label{eqNotation6}
\end{equation}
Here $0_{d \times d}$ and $I_{d \times d}$ are the zero and the identity $d \times d$ matrices, respectively.

\section{Some preliminary results}

In this section we will introduce several concepts and review some results which will be useful in the sequel. Recall that the cross-Wigner function is defined by
\begin{equation}
W(f,g) (x,p)= \frac{1}{(2 \pi\hbar)^d} \int_{\mathbb{R}^d} f(x+y/2) \overline{g (x-y/2)} e^{- \frac{i}{\hbar}y \cdot p} dy~,
\label{eqPreliminary1}
\end{equation}
for $f,g \in L^2 (\mathbb{R}^d)$. Thus the Wigner function (\ref{eqIntro1}) of a pure state is $Wf(x,p)=W(f,f)(x,p)$.  On the other hand, the ambiguity function is:
\begin{equation}
A(f,g) (\tau, \omega) =  \int_{\mathbb{R}^d}f(t- \tau/2) \overline{g(t+ \tau/2)} e^{-2 \pi i \omega \cdot t} dt~.
\label{eqPreliminary2}
\end{equation}
The Cauchy-Schwarz inequality guarantees that $A$ is uniformly continuous and bounded:
\begin{equation}
\left|A(f,g) (\tau, \omega)\right| \leq \|f\|_{L^2(\mathbb{R}^d)} \|g\|_{L^2(\mathbb{R}^d)}~,
\label{eqPreliminary3}
\end{equation}
for all $\tau, \omega \in \mathbb{R}^d$.

This is a particular case ($p=2$) of the inequality obtained from H\"older's inequality:
\begin{equation}
\left|A(f,g) (\tau, \omega)\right| \leq \|f\|_{L^p(\mathbb{R}^d)} \|g\|_{L^{p^{\prime}}(\mathbb{R}^d)}~,
\label{eqPreliminary3A}
\end{equation}
for $1 \leq p \leq \infty$.

Furthermore, if $f, g \in L^2 (\mathbb{R}^d)$, then $A(f,g) \in L^2 (\mathbb{R}^{2d})$ and the following orthogonality identities hold:
\begin{equation}
\langle A(f_1,g_1)|A(f_2,g_2) \rangle_{L^2 (\mathbb{R}^{2d})} = \langle f_1|f_2 \rangle_{L^2 (\mathbb{R}^{d})} ~ \langle g_2| g_1 \rangle_{L^2 (\mathbb{R}^{d})} ~.
\label{eqPreliminary4}
\end{equation}
Similarly, $W(f,g) \in L^2 (\mathbb{R}^{2d})$ and the Moyal identity holds (see e.g. \cite{Gosson2}):
\begin{equation}
\langle W(f_1,g_1)|W(f_2,g_2) \rangle_{L^2 (\mathbb{R}^{2d})} = \frac{1}{(2 \pi \hbar)^d} \langle f_1|f_2 \rangle_{L^2 (\mathbb{R}^{d})} ~ \langle g_2| g_1 \rangle_{L^2 (\mathbb{R}^{d})} ~.
\label{eqPreliminary4A}
\end{equation}
The cross-Wigner function (\ref{eqPreliminary1}) and the ambiguity function (\ref{eqPreliminary2}) are related by:
\begin{equation}
W(f,g) (x,p)=\frac{1}{(\pi \hbar)^d} A(f,g^-) \left(-2 x, \frac{p}{\pi \hbar} \right)~,
\label{eqPreliminary5}
\end{equation}
where $g^-(x)=g(-x)$.

Notice that, from (\ref{eqIntro3},\ref{eqPreliminary3},\ref{eqPreliminary5}), we have:
\begin{equation}
\|W \rho \|_{\infty} \leq \frac{1}{(\pi \hbar)^d}~.
\label{eq14A}
\end{equation}
Let us also recall the Bastiaans-Littlejohn Theorem \cite{Bastiaans,Littlejohn} concerning Gaussian states. Consider the normalized Gaussian
\begin{equation}
\frac{\sqrt{\det M}}{(\pi \hbar)^d} e^{- \frac{1}{\hbar} (z-z_0) \cdot M(z-z_0)}~.
\label{eq14B}
\end{equation}
where $z=(x,p) \in \mathbb{R}^{2d}$, $z_0 \in \mathbb{R}^{2d}$ is some fixed constant and $M$ is a real, symmetric, positive-definite $2d \times 2d $ matrix. Then (\ref{eq14B}) is the Wigner function associated with some pure state $f \in L^2 (\mathbb{R}^{d})$ if and only if $M$ is a symplectic matrix: $M \in Sp(2d; \mathbb{R})$ .

In \cite{Lieb1} Lieb defined the functional 
\begin{equation}
I_{f,g} (q)= \int_{\mathbb{R}^d}  \int_{\mathbb{R}^d} \left|A(f,g) (\tau, \omega)\right|^q d \tau  d \omega = \|A\|_{L^q (\mathbb{R}^{2d})}^q~,
\label{eqPreliminary10}
\end{equation}
for $q \in \left[2, \infty\right]$.
 
He then proved the inequalities (\ref{LiebIntro1}) and (\ref{LiebIntro2}) mentioned in the Introduction. Let us recapitulate these results in more detail.

In the sequel, the constants $H(p,q,d)$ are given by:
\begin{equation}
H(p,q,d)= 
\left[C_{q^{\prime}} \left(\frac{C_{p/q^{\prime}} C_{p^{\prime}/q^{\prime}}}{C_{q/q^{\prime}}} \right)^{\frac{1}{q^{\prime}}} \right]^d~,
\label{eqConstants1}
\end{equation}
where $C_r$ is the Babenko-Beckner constant \cite{Babenko,Beckner}: 
\begin{equation}
C_r= \sqrt{\frac{r^{1/r}}{(r^{\prime})^{1/r^{\prime}}}}~,
\label{eqConstants2}
\end{equation}
if $r \neq 1$ or $r \neq \infty$ and
\begin{equation}
C_1=C_{\infty}=1~.
\label{eqConstants3}
\end{equation}
Explicitly, we have;
\begin{equation}
H(p,q,d)= \left[ \frac{p p^{\prime}}{q^2} |q-2|^{2-q} ~|q-p|^{-1+ q/p} ~|q-p^{\prime}|^{-1+q/p^{\prime}} \right]^{\frac{d}{2q}}~.
\label{eqConstants4}
\end{equation}

\begin{theorem}\label{TheoremLieb1}
Let $q \geq 2$ and $q^{\prime} \leq p, p^{\prime} \leq q$. If $f \in L^p (\mathbb{R}^d)$ and $g \in L^{p^{\prime}} (\mathbb{R}^d)$, then $A(f,g) \in L^q (\mathbb{R}^{2d})$ and we have:
\begin{equation}
\|A(f,g)\|_{L^q (\mathbb{R}^{2d})} \leq H(p,q,d)\|f\|_{L^p (\mathbb{R}^d)} \|g\|_{L^{p^{\prime}} (\mathbb{R}^d)} ~.
\label{eqLieb11}
\end{equation}
For $q=2$, the previous inequality becomes an identity for all $f,g \in L^2 (\mathbb{R}^d)$. 

If $q>2$, we have an equality if and only if $f,g$ are a matched Gaussian pair:
\begin{equation}
\begin{array}{l}
f(x)=\exp \left(- x \cdot A x + b \cdot x + \gamma \right)\\
\\
g(x)=\exp \left(- x \cdot A x + c \cdot x + \eta \right)~, 
\end{array}
\label{eqLieb12}
\end{equation}
where $A$ is a symmetric $d \times d$ matrix, with $Re(A)>0$, $b,c \in \mathbb{C}^d$ and $\gamma, \eta \in \mathbb{C}$.

\end{theorem}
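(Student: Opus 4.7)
The plan is to follow the two-step Lieb strategy: apply the sharp Hausdorff-Young inequality in the frequency variable $\omega$, then the sharp Young convolution inequality in the lag variable $\tau$, each with its Babenko-Beckner constant (\ref{eqConstants2}).

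For the first step, fix $\tau \in \mathbb{R}^d$ and observe from (\ref{eqPreliminary2}) that $A(f,g)(\tau,\cdot)$ is (up to the $2\pi$-Fourier convention) the Fourier transform of $F_\tau(t) := f(t-\tau/2)\overline{g(t+\tau/2)}$. Since $q \geq 2$, the sharp Hausdorff-Young inequality of Babenko-Beckner yields $\|A(f,g)(\tau,\cdot)\|_{L^q(\mathbb{R}^d)} \leq C_{q'}^d \|F_\tau\|_{L^{q'}(\mathbb{R}^d)}$. Raising to the $q$-th power, integrating in $\tau$, and applying the translation $u = t-\tau/2$ in the inner integral, one identifies the right-hand side with a constant times $\bigl\|\,|f|^{q'} \star |g|^{q'}\,\bigr\|_{L^{q/q'}(\mathbb{R}^d)}^{q/q'}$, where $\star$ denotes correlation.

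For the second step, apply sharp Young to this correlation with input exponents $a = p/q'$, $b = p'/q'$ and output exponent $r = q/q'$. The hypothesis $q' \leq p, p' \leq q$ ensures $a,b \geq 1$, and the Young compatibility $1/a + 1/b = 1 + 1/r$ reduces via $1/p + 1/p' = 1$ to an identity. Combining the Hausdorff-Young and Young constants and simplifying by means of (\ref{eqConstants2}), the prefactor collapses to exactly $H(p,q,d)$ as in (\ref{eqConstants1}); the explicit form (\ref{eqConstants4}) is then a routine algebraic manipulation. The degenerate case $q=2$ forces $p=p'=2$ and $H(2,2,d)=1$, so (\ref{eqLieb11}) reduces to the orthogonality identity (\ref{eqPreliminary4}) with $f_1=f_2=f$, $g_1=g_2=g$.

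The main obstacle is the equality characterization for $q > 2$. Equality in sharp Hausdorff-Young forces $F_\tau$ to be a complex Gaussian in $t$ for almost every $\tau$, while equality in sharp Young (by Lieb's rigidity theorem for the convolution inequality) forces $|f|^{q'}$ and $|g|^{q'}$ to be Gaussians with a common covariance. The delicate part is to combine these two constraints and then recover the phase information lost when passing to $|f|,|g|$, so as to pin down precisely the two-parameter Gaussian family (\ref{eqLieb12}) with a \emph{common} symmetric matrix $A$ of positive real part. Sufficiency in the converse direction is then a direct computation: for any matched Gaussian pair (\ref{eqLieb12}), $A(f,g)$ is itself a complex Gaussian on $\mathbb{R}^{2d}$, and equality is attained at each intermediate step.
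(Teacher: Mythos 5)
The paper does not actually prove this theorem---it is quoted from Lieb \cite{Lieb1} (cf.\ Remark \ref{Remark1})---and your outline is precisely Lieb's original argument: sharp Hausdorff--Young (Babenko--Beckner) in the $\omega$-variable followed by sharp Young applied to the correlation $|f|^{q^{\prime}}\star|g|^{q^{\prime}}$, with the correct exponent bookkeeping $a=p/q^{\prime}$, $b=p^{\prime}/q^{\prime}$, $r=q/q^{\prime}$ and a constant that indeed collapses to (\ref{eqConstants1}). Your treatment of the equality case is only a sketch, but you correctly locate the real work in the rigidity statements for the two sharp inequalities (Lieb's ``Gaussian kernels have only Gaussian maximizers'' and the Brascamp--Lieb equality cases for Young's inequality), which is exactly how \cite{Lieb1} completes the characterization (\ref{eqLieb12}).
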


\begin{remark}\label{Remark1}
This inequality was proved with resort to the H\"older and the sharp Young convolution inequality \cite{Lieb4,Lieb5}. We shall be mainly interested in the $p=p^{\prime}=2$ case, which reads:
\begin{equation}
\|A(f,g)\|_{L^q (\mathbb{R}^{2d})} \leq \left(\frac{2}{q}\right)^{d/q} \|f\|_{L^2 (\mathbb{R}^d)} \|g\|_{L^{2} (\mathbb{R}^d)} ~.
\label{eqLieb12A}
\end{equation}
The case $q=2$, is just the identity:
\begin{equation}
\|A(f,g)\|_{L^2 (\mathbb{R}^{2d})} = \|f\|_{L^2 (\mathbb{R}^d)} \|g\|_{L^{2} (\mathbb{R}^d)} ~,
\label{eqLieb11A}
\end{equation}
for all $f,g \in L^2 (\mathbb{R}^d)$,which is an immediate consequence of (\ref{eqPreliminary4}).
\end{remark}

For $1 \leq q < 2$, Lieb proved an opposite inequality. The proof required Leindler's inequality \cite{Barthe,Lieb4,Leindler}. 

\begin{theorem}\label{TheoremLieb2}
Assume that $t \mapsto f(t-\tau/2) \overline{g(t+\tau/2)} \in L^1 (\mathbb{R}^d)$ for almost all fixed $\tau \in \mathbb{R}^d$, and let $1 \leq q <2$. Suppose that $0 < \|A(f,g)\|_{L^q (\mathbb{R}^{2d})} < \infty$. Then $f \in L^p (\mathbb{R}^d)$ and $g \in L^{p^{\prime}} (\mathbb{R}^d)$ for $q \leq p,p^{\prime} \leq q^{\prime}$, and we have:
\begin{equation}
\|A(f,g)\|_{L^q (\mathbb{R}^{2d})} \geq H(p,q,d)\|f\|_{L^p (\mathbb{R}^d)} \|g\|_{L^{p^{\prime}} (\mathbb{R}^d)} ~.
\label{eqLieb12B}
\end{equation}
Again, the inequality is sharp and can be saturated if and only if $f$ and $g$ are a matched pair of Gaussians.

\end{theorem}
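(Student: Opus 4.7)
The plan is to follow the strategy used for Theorem \ref{TheoremLieb1}, replacing each sharp Young-type inequality by its reverse counterpart. The starting point is the representation of the ambiguity function as a partial Fourier transform: after the change of variables $u = t - \tau/2$ in (\ref{eqPreliminary2}),
$$A(f,g)(\tau,\omega) = e^{-i \pi \omega \cdot \tau} \int_{\mathbb{R}^d} f(u) \overline{g(u+\tau)} e^{-2 \pi i \omega \cdot u} du~,$$
so that the $\omega$-dependence is that of a Fourier transform of $h_\tau(u) := f(u) \overline{g(u+\tau)}$. The assumption that $h_\tau \in L^1(\mathbb{R}^d)$ for almost every $\tau$ is precisely what is needed to make this pointwise formula well-defined.

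Next, I would compute $\|A(f,g)\|_{L^q(\mathbb{R}^{2d})}^q$ as an iterated integral, first in $\omega$ and then in $\tau$. For $q \geq 2$, Lieb used the sharp Hausdorff-Young inequality (with Babenko-Beckner constants) in the $\omega$-integration and then the sharp Young convolution inequality in the $\tau$-integration. For $1 \leq q < 2$ both of these inequalities reverse when applied to suitable non-negative integrands; the appropriate substitute is Leindler's inequality \cite{Leindler}, a reverse Young-type convolution inequality that supplies precisely the same Babenko-Beckner sharp constants but in the opposite direction. Combining these reverse estimates with H\"older's inequality in the exponents $q \leq p, p^{\prime} \leq q^{\prime}$ yields the lower bound $H(p,q,d) \|f\|_{L^p(\mathbb{R}^d)} \|g\|_{L^{p^{\prime}}(\mathbb{R}^d)}$.

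The sharpness and the Gaussian equality case then follow from the fact that Leindler's reverse convolution inequality is saturated precisely by Gaussian integrands; tracking this saturation back through the change of variables yields the matched Gaussian pair (\ref{eqLieb12}). Furthermore, the finiteness of $\|A(f,g)\|_{L^q(\mathbb{R}^{2d})}$ combined with the estimate automatically forces $f \in L^p(\mathbb{R}^d)$ and $g \in L^{p^{\prime}}(\mathbb{R}^d)$, giving the $L^p$-appurtenance claim in the statement.

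The main obstacle, I expect, is producing the reverse Hausdorff-Young step with the exact Babenko-Beckner constant; strictly speaking, the reverse Hausdorff-Young fails in general, so one must either argue by approximation (restricting first to a dense class where both $h_\tau$ and its Fourier transform have extra integrability) or bypass that step by applying a Brascamp-Lieb-Barthe-type reverse multilinear inequality directly to the full $2d$-dimensional integral over $(\tau,\omega)$. Verifying that the resulting sharp constant agrees with the closed-form expression (\ref{eqConstants1}) is a bookkeeping task, but a delicate one because the Babenko-Beckner constants enter nontrivially in each factor of (\ref{eqConstants1}).
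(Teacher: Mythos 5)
This theorem is not proved in the paper at all: it is quoted from Lieb \cite{Lieb1}, with only the remark that the proof requires Leindler's inequality. So your sketch can only be measured against Lieb's original argument, whose skeleton you have largely reproduced: write $A(f,g)(\tau,\omega)$, up to a unimodular factor, as the Fourier transform in $\omega$ of $h_\tau(u)=f(u)\overline{g(u+\tau)}$, estimate the inner $\omega$-integral by a sharp Hausdorff--Young step, and observe that the remaining $\tau$-integral is the $L^{q/q'}$ quasi-norm (with $q/q'\le 1$) of the convolution of $|f|^{q'}$ and $|g|^{q'}$ (suitably reflected), to which a reverse Young inequality applies. That is indeed the architecture of Lieb's proof.

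The genuine gap is the one you flag yourself, and your proposed repair points in the wrong direction. No reverse Hausdorff--Young inequality is needed: since $1\le q\le 2$, the function $\widehat{h_\tau}$ is bounded (because $h_\tau\in L^1$) and by hypothesis lies in $L^q$ for a.e.\ $\tau$, hence in $L^2$; one then applies the \emph{forward} sharp Hausdorff--Young inequality to $\widehat{h_\tau}$ itself, obtaining $\|h_\tau\|_{q'}\le C_q^{\,d}\,\|\widehat{h_\tau}\|_{q}$ and therefore the lower bound $\|\widehat{h_\tau}\|_{q}\ge C_q^{-d}\|h_\tau\|_{q'}$, with equality only for Gaussians by Lieb's theorem on Gaussian maximizers \cite{Lieb5}. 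Neither an approximation on a dense class nor a global Brascamp--Lieb--Barthe inequality over $\mathbb{R}^{2d}$ is how the constant (\ref{eqConstants1}) arises, so as written your ``main obstacle'' paragraph does not close the argument. A second inaccuracy: Leindler's converse H\"older inequality gives the reverse Young convolution inequality for exponents $\le 1$, but \emph{without} sharp constants; the sharp constants and the characterization of the Gaussian extremizers are due to Brascamp and Lieb \cite{Lieb4} (see also \cite{Barthe}), and it is that sharp form which produces the Babenko--Beckner factors $C_{p/q'},C_{p'/q'},C_{q/q'}$ in (\ref{eqConstants1}) and the matched-Gaussian equality case (\ref{eqLieb12}). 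Finally, the deduction that $f\in L^p$ and $g\in L^{p'}$ is not ``automatic'': it uses both the finiteness \emph{and} the strict positivity of $\|A(f,g)\|_{L^q(\mathbb{R}^{2d})}$, the latter to exclude the degenerate case in which one norm vanishes while the other is infinite. In sum, the outline is faithful to Lieb's strategy, but the one step you could not supply is resolved differently from what you propose, and the attribution of the sharp reverse convolution inequality needs correcting.
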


A straightforward corollary of Theorem \ref{TheoremLieb1} is:
\begin{corollary}\label{CorollaryLieb1}
Let $W \rho(x,p)= \sum_j p_j W f_j(x,p) $ be the Wigner function associated with some mixed state (cf.(\ref{eqIntro3})) and let $q \geq 2$. Then the following inequality holds:
\begin{equation}
\|W \rho \|_{L^q (\mathbb{R}^{2d})} \leq \left(\frac{1}{q (\pi \hbar)^{q-1}}\right)^{d/q}~. 
\label{eqLieb13}
\end{equation}
If $q=2$ we have an identity, if and only if $\rho$ is a pure state. If $q>2$, then an equality holds if and only if $W \rho$ is a Gaussian:
\begin{equation}
W \rho (z) = \frac{1}{(\pi \hbar)^d} \exp \left(-\frac{1}{\hbar}(z-z_0) \cdot M (z-z_0) \right)~,
\label{eqLieb14}
\end{equation}
where $z_0 \in \mathbb{R}^{2d}$ and $M$ is a real, symmetric, positive-definite matrix, which is symplectic: $M \in Sp(2d;\mathbb{R})$.
\end{corollary}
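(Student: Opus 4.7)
The plan is to derive the bound first for a pure state by combining the identity (\ref{eqPreliminary5}) relating the Wigner and ambiguity functions with the $p=p'=2$ specialization (\ref{eqLieb12A}) of Theorem \ref{TheoremLieb1}, and then to extend it to mixed states by applying Minkowski's inequality to the convex decomposition $W\rho = \sum_j p_j W f_j$. The equality analysis splits naturally into the $q=2$ case, which is handled by the Moyal identity, and the $q>2$ case, which requires chaining the equality clauses of Minkowski, Lieb, and Bastiaans-Littlejohn.

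For the pure-state bound I would take $g=f$ in (\ref{eqPreliminary5}) and perform the change of variables $\tau = -2x$, $\omega = p/(\pi\hbar)$, with Jacobian $2^{-d}(\pi\hbar)^d$, to obtain
\begin{equation}
\|W f\|_{L^q(\mathbb{R}^{2d})}^q = \frac{1}{2^d(\pi\hbar)^{d(q-1)}}\,\|A(f,f^-)\|_{L^q(\mathbb{R}^{2d})}^q .
\label{planEqChange}
\end{equation}
Since $\|f^-\|_{L^2}=\|f\|_{L^2}=1$, substituting the Remark \ref{Remark1} bound $\|A(f,f^-)\|_q \le (2/q)^{d/q}$ and taking $q$-th roots produces the pure-state version of (\ref{eqLieb13}). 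The mixed-state bound then follows by Minkowski's inequality together with $\sum_j p_j = 1$. For $q=2$, the Moyal identity (\ref{eqPreliminary4A}) combined with $\langle f_j | f_k\rangle = \delta_{jk}$ gives $\|W\rho\|_{L^2}^2 = (2\pi\hbar)^{-d}\sum_j p_j^2$, so saturation of (\ref{eqLieb13}) at $q=2$ is equivalent to $\sum_j p_j^2 = 1$, i.e.\ to $\rho$ being pure.

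For $q>2$ I would reverse the chain. Minkowski equality in $L^q$ with $q>1$ forces the family $\{W f_j : p_j>0\}$ to consist of pairwise nonnegative real multiples of a single function; since each $W f_j$ has integral $\|f_j\|_{L^2}^2 = 1$, those multiples must all equal $1$, and orthonormality of the $f_j$ then forces $\rho$ to be a pure state $f$. The equality clause in Theorem \ref{TheoremLieb1} applied to (\ref{eqLieb12A}) next demands that $(f,f^-)$ be a matched Gaussian pair in the sense of (\ref{eqLieb12}); this condition is automatic whenever $f$ is itself a (possibly complex, possibly shifted) Gaussian, since $f$ and $f^-$ then share the same quadratic matrix $A$. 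The resulting $W f$ is thus a Gaussian on $\mathbb{R}^{2d}$, and the Bastiaans-Littlejohn statement recalled around (\ref{eq14B}) identifies it as having the form (\ref{eqLieb14}) with $M \in Sp(2d;\mathbb{R})$; the converse is obtained by running the same chain backwards. I expect the hardest step to be this equality analysis, and in particular checking that Minkowski equality for signed real-valued functions such as the $W f_j$ really collapses the convex sum (ruling out cancellations that might mimic proportionality), together with translating Lieb's Gaussian-extremizer clause applied to the pair $(f,f^-)$ cleanly into the phase-space Gaussian form (\ref{eqLieb14}).
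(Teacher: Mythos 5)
Your proposal is correct and follows essentially the same route as the paper: the pure-state bound via the change of variables in (\ref{eqPreliminary5}) combined with (\ref{eqLieb12A}), convexity/Minkowski for the mixed-state sum, the Moyal identity (\ref{eqPreliminary4A}) for $q=2$, and the Lieb and Bastiaans--Littlejohn equality clauses for $q>2$. Your equality analysis for $q>2$ (strict convexity of $L^q$ collapsing the convex sum, purity via orthonormality, then the matched-Gaussian condition for the pair $(f,f^-)$) is in fact more detailed than the paper's, which simply asserts that equality holds iff the state is pure and Gaussian and invokes Littlejohn's theorem.
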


\begin{proof}
Assuming $\|f\|_{L^2(\mathbb{R}^d)}=\|g\|_{L^2(\mathbb{R}^d)}=1$, we have from (\ref{eqPreliminary5},\ref{eqLieb12A}):
\begin{equation}
\begin{array}{c}
\| W(f,g) \|_{L^q (\mathbb{R}^{2d})}^q= \int_{\mathbb{R}^{2d}} |W(f,g)(x,p)|^q dx d p =\\
\\
= \left(\frac{1}{\pi \hbar} \right)^{qd} \left(\frac{\pi \hbar}{2} \right)^d \int_{\mathbb{R}^{2d}} |A (f,g^-) (\tau, \omega) |^q d \tau d \omega \leq \left(\frac{1}{q(\pi \hbar)^{q-1}} \right)^d~. 
\end{array}
\label{eqLieb15}
\end{equation}
Thus, by convexity, we have:
\begin{equation}
\|W \rho \|_{L^q (\mathbb{R}^{2d})}= \|\sum_j p_j W f_j \|_{L^q (\mathbb{R}^{2d})} \leq \sum_j p_j \|W f_j \|_{L^q (\mathbb{R}^{2d})} \leq \left(\frac{1}{q(\pi \hbar)^{q-1}} \right)^{d/q}~. 
\label{eqLieb16}
\end{equation}
Moreover, if $q >2$, we have an equality if and only if the state is pure and a Gaussian. By Littlejohn's theorem it follows that $W \rho$ is of the form (\ref{eqLieb14}).

Alternatively, if $q=2$, we have from (\ref{eqPreliminary4A}):
\begin{equation}
\begin{array}{c}
\|W \rho \|_2^2 = \sum_{j,k} p_j p_k\langle W f_j | W f_k\rangle= \frac{1}{(2 \pi \hbar)^d}\sum_{j,k} p_j p_k \delta_{j,k}=\\
\\
= \frac{1}{(2 \pi \hbar)^d} \sum_{j} p_j^2 \leq \frac{1}{(2 \pi \hbar)^d} \sum_{j} p_j =\frac{1}{(2 \pi \hbar)^d}~.
\end{array}
\label{eqLieb16A}
\end{equation}
We have an equality if and only if the state is pure, which concludes the proof. 
\end{proof}
\begin{remark}\label{RemarkPurity}
The quantity
\begin{equation}
\mathcal{P} \left[\rho \right]:= (2 \pi \hbar)^d \|W \rho\|_2^2  \leq 1
\label{eqLieb16B}
\end{equation}
is called the \textit{purity} of the state $\rho$.
\end{remark}

\section{A new inequality for the ambiguity and the Wigner distributions}

We now state and prove our inequality (\ref{LiebIntro3}).
\begin{theorem}\label{TheoremNewInequality1}
Let $q,\theta, p$ be such that $1 < q <2$, $2-q < \theta <1$ and $ \frac{q-\theta}{q-1} \leq p, p^{\prime} \leq \frac{q-\theta}{1- \theta}$. Suppose that $A(f,g) \in L^1 (\mathbb{R}^{2d})$, $f \in  L^p (\mathbb{R}^{d})$ and $g \in  L^{p^{\prime}} (\mathbb{R}^{d})$. Then    
$ A(f,g) \in L^q(\mathbb{R}^{2d})$ and we have:
\begin{equation}
\|A(f,g)\|_{L^q(\mathbb{R}^{2d})} \leq  
\|A(f,g)\|_{L^1(\mathbb{R}^{2d})}^{\theta/q} \left(H \left(p, \frac{q-\theta}{1- \theta},d \right)
\|f\|_{L^p(\mathbb{R}^{d})}
\|g\|_{L^{p^{\prime}}(\mathbb{R}^{d})}\right)^{1- \theta/q}~.
\label{eqNewInequality1}
\end{equation}
\end{theorem}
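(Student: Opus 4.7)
The plan is to prove the inequality by combining log-convexity of $L^q$ norms (Lyapunov's inequality) with Lieb's inequality (Theorem \ref{TheoremLieb1}) at a suitably chosen intermediate exponent. The exponent $\theta/q$ appearing on $\|A(f,g)\|_{L^1}$ strongly suggests interpolating $L^q$ between $L^1$ and some $L^r$ with $r>2$, and a short computation identifies $r$ uniquely.

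First I would set
\begin{equation*}
r:=\frac{q-\theta}{1-\theta},
\end{equation*}
and check the conditions for applying Theorem \ref{TheoremLieb1} with exponent $r$ in place of $q$. The hypothesis $2-q<\theta<1$ is precisely equivalent to $r>2$: indeed $r>2$ iff $q-\theta>2-2\theta$ iff $\theta>2-q$. The hypothesis $\frac{q-\theta}{q-1}\le p,p'\le\frac{q-\theta}{1-\theta}$ is, after passing to conjugates, exactly the Lieb condition $r'\le p,p'\le r$, since $r'=\frac{r}{r-1}=\frac{q-\theta}{q-1}$. Therefore Theorem \ref{TheoremLieb1} applies and yields
\begin{equation*}
\|A(f,g)\|_{L^{r}(\mathbb{R}^{2d})}\le H(p,r,d)\,\|f\|_{L^{p}(\mathbb{R}^{d})}\|g\|_{L^{p'}(\mathbb{R}^{d})}.
\end{equation*}

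Next I would invoke the log-convexity of $h\mapsto\|h\|_{L^s}$ in $1/s$ (Lyapunov / Hölder interpolation). Writing $\frac{1}{q}=\frac{\alpha}{1}+\frac{1-\alpha}{r}$, one finds $\alpha=\frac{r-q}{q(r-1)}$, and a direct substitution of $r=(q-\theta)/(1-\theta)$ gives $\alpha=\theta/q$. Applied to $h=A(f,g)$, which by assumption lies in $L^1\cap L^r$ (the $L^r$ membership coming from the preceding step), this produces
\begin{equation*}
\|A(f,g)\|_{L^{q}(\mathbb{R}^{2d})}\le\|A(f,g)\|_{L^{1}(\mathbb{R}^{2d})}^{\theta/q}\,\|A(f,g)\|_{L^{r}(\mathbb{R}^{2d})}^{1-\theta/q}.
\end{equation*}
Substituting the Lieb bound from the previous step into the second factor delivers (\ref{eqNewInequality1}).

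There is essentially no serious obstacle, since the argument is a clean two-step interpolation; the only thing one has to be careful about is the bookkeeping of the exponents, namely the identifications $\theta=(r-q)/(r-1)$ and $r'=(q-\theta)/(q-1)$, which pin down how the hypotheses on $\theta,p,p'$ correspond to the Lieb regime. The nontriviality of the bound comes entirely from Lieb's sharp inequality; the new content of Theorem \ref{TheoremNewInequality1} is the observation that, once one accepts paying the extra factor $\|A(f,g)\|_{L^1}^{\theta/q}$, one can push the $L^q$ estimate down into the range $1<q<2$ in a direction opposite to Theorem \ref{TheoremLieb2}.
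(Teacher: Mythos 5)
Your proof is correct and is essentially the paper's own argument: the Lyapunov/log-convexity step with $r=\frac{q-\theta}{1-\theta}$ and $\alpha=\theta/q$ is precisely the Hölder split $|A|^q=|A|^{\theta}\,|A|^{q-\theta}$ with exponents $1/\theta$ and $1/(1-\theta)$ used in the paper, followed by the same application of Lieb's inequality at the exponent $r>2$. The exponent bookkeeping ($r'=\frac{q-\theta}{q-1}$, $r>2\Leftrightarrow\theta>2-q$) checks out, so there is nothing to fix.
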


\begin{proof}
If $f \in L^p(\mathbb{R}^{d})$, $g\in L^{p^{\prime}}(\mathbb{R}^{d})$, then by (\ref{eqPreliminary3A}), $A(f,g) \in L^{\infty} (\mathbb{R}^{2d})$. If, in addition, $A(f,g) \in L^1 (\mathbb{R}^{2d})$, then by familiar interpolation theorems, we also have $A(f,g) \in L^q(\mathbb{R}^{2d})$ for all $q \in \left[1, \infty \right]$.

From H\"older's inequality $\|FG\|_{L^1} \leq \|F\|_{L^r} \|G \|_{L^{r^{\prime}}}$, $\frac{1}{r}+ \frac{1}{r^{\prime}}=1$, with $r= \frac{1}{\theta}$ and $r^{\prime}=\frac{1}{1- \theta}$, we obtain:
\begin{equation}
\begin{array}{c}
\|A(f,g)\|_q^q = \int \int |A(f,g) (\tau, \omega)|^q d \tau d \omega= \\
\\
=\int \int |A(f,g) (\tau, \omega)|^{\theta} |A(f,g) (\tau, \omega)|^{q- \theta} d \tau d \omega \leq \|A(f,g)\|_{1}^{\theta} \|A(f,g)\|_{\frac{q-\theta}{1- \theta}}^{q- \theta}~. 
\end{array}
\label{eqNewInequality2}
\end{equation}
Since, by assumption, $\frac{q-\theta}{1- \theta} >2$, we have from the Lieb inequality (\ref{eqLieb11}) and (\ref{eqNewInequality2}) that:
\begin{equation}
\begin{array}{c}
\|A(f,g)\|_q \leq \|A(f,g)\|_{1}^{\theta/q} \|A(f,g)\|_{\frac{q-\theta}{1- \theta}}^{1- \theta/q}\\
\\
\leq  \|A(f,g)\|_{1}^{\theta/q}  \left(H \left(p, \frac{q-\theta}{1- \theta},d \right) \|f\|_p \|g\|_{p^{\prime}} \right)^{1- \theta/q}~,
\end{array}
\label{eqNewInequality3}
\end{equation}
and the result follows.
\end{proof}

\begin{remark}\label{RemarkNewInequality1}
From the proof, we realize that there are no functions $f,g$ for which we have an equality in (\ref{eqNewInequality1}). Indeed, from H\"older's inequality, we have an identity in (\ref{eqNewInequality2}) if and only if there exists a constant $\lambda \geq 0$, such that \cite{Lieb3}:
\[
\left(|A(f,g)(z)|^{q- \theta}\right)^{\theta}=\lambda \left(|A(f,g) (z)|^{\theta}\right)^{1- \theta} \Leftrightarrow |A (f,g)(z)|^{(q-1) \theta}=\lambda~, 
\] 
for all $z=(\tau , \omega) \in \mathbb{R}^{2d}$. If $A(f,g)$ does not vanish identically, then this is only possible if either $\theta=0$ or $q=1$ (recall that $A(f,g) \in L^2 (\mathbb{R}^{2d})$). However, both values are outside of the range considered in Theorem \ref{TheoremNewInequality1}.

The case $p=p^{\prime}=2$,
\begin{equation}
\|A(f,g)\|_{L^q(\mathbb{R}^{2d})} \leq \left[\frac{2(1- \theta)}{q - \theta} \right]^d 
\|A(f,g)\|_{L^1(\mathbb{R}^{2d})}^{\theta/q}
\|f\|_{L^2(\mathbb{R}^{d})}
\|g\|_{L^2(\mathbb{R}^{d})}~,
\label{eqNewInequalityRemark1}
\end{equation}
will be useful in the sequel.
\end{remark}

A simple consequence of the proof of this theorem is the analogous result for the cross-Wigner distribution $W(f,g)$ and for the Wigner function of a mixed state $W \rho$:
\begin{proposition}\label{PropositionNewIneq}
Let $q,\theta$ be such that $1 < q <2$ and $2-q < \theta <1$. Suppose that $W(f,g) \in L^1 (\mathbb{R}^{2d})$ and $f,g \in L^2 (\mathbb{R}^{d})$. Then    
$ W(f,g) \in L^q(\mathbb{R}^{2d})$ and we have:
\begin{equation}
\begin{array}{c}
\|W(f,g)\|_{L^q(\mathbb{R}^{2d})} \\
\\
\leq  \left(\frac{\pi \hbar}{2} \right)^{d/q} \left[\frac{2(1- \theta)}{\pi \hbar (q- \theta)} \right]^d \left(2^d \|W(f,g)\|_{L^1(\mathbb{R}^{2d})} \right)^{\theta/q} \|f\|_{L^2(\mathbb{R}^{d})} \|g\|_{L^2(\mathbb{R}^{d})} ~.
\end{array}
\label{eqNewIneq1}
\end{equation}

On the other hand, if $W \rho$ is of the form (\ref{eqIntro3}), and $W \rho \in L^1(\mathbb{R}^{2d})$, then $W \rho \in L^q(\mathbb{R}^{2d})$ and:
\begin{equation}
\|W \rho \|_{L^q(\mathbb{R}^{2d})} \\
\\
\leq  \frac{1}{(\pi \hbar)^{d(1-1/q)}} \left(\frac{1- \theta}{q- \theta} \right)^{\frac{d}{q}(1- \theta)} \|W \rho\|_{L^1(\mathbb{R}^{2d})} ~.
\label{eqNewIneq2}
\end{equation}
\end{proposition}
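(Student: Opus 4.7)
The statement is a clean corollary of the method used in Theorem \ref{TheoremNewInequality1}. For (\ref{eqNewIneq1}) one converts the cross-Wigner problem into an ambiguity-function problem via (\ref{eqPreliminary5}); for (\ref{eqNewIneq2}) one mimics the Hölder-splitting step applied directly to $W\rho$, invoking Corollary \ref{CorollaryLieb1} in place of Theorem \ref{TheoremLieb1}.

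For (\ref{eqNewIneq1}), I substitute (\ref{eqPreliminary5}) into $\|W(f,g)\|_{L^q}^q$ and perform the change of variables $\tau = -2x$, $\omega = p/(\pi\hbar)$. The resulting Jacobian gives
\[
\|W(f,g)\|_{L^q(\mathbb{R}^{2d})} = 2^{-d/q}(\pi\hbar)^{-d(q-1)/q}\|A(f,g^-)\|_{L^q(\mathbb{R}^{2d})},
\]
so in particular $\|A(f,g^-)\|_{L^1(\mathbb{R}^{2d})} = 2^d\|W(f,g)\|_{L^1(\mathbb{R}^{2d})}$, which together with the hypothesis $W(f,g)\in L^1$ and $f,g\in L^2$ places $A(f,g^-)$ in the regime of Theorem \ref{TheoremNewInequality1}. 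Since $\|g^-\|_{L^2}=\|g\|_{L^2}$, applying the $p=p'=2$ instance (\ref{eqNewInequalityRemark1}) and collecting constants reproduces the factor $(\pi\hbar/2)^{d/q}\bigl[2(1-\theta)/(\pi\hbar(q-\theta))\bigr]^d$ announced in (\ref{eqNewIneq1}).

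For (\ref{eqNewIneq2}), I repeat the Hölder step of (\ref{eqNewInequality2}) with $W\rho$ in place of $A(f,g)$. With conjugate exponents $1/\theta$ and $1/(1-\theta)$ this gives
\[
\|W\rho\|_{L^q}^{q} \;\leq\; \|W\rho\|_{L^1}^{\theta}\,\|W\rho\|_{L^{q_\ast}}^{q-\theta}, \qquad q_\ast := \tfrac{q-\theta}{1-\theta}.
\]
The admissibility condition $\theta > 2-q$ is exactly the statement $q_\ast > 2$, so Corollary \ref{CorollaryLieb1} applies and bounds $\|W\rho\|_{L^{q_\ast}}$ by $\bigl[q_\ast(\pi\hbar)^{q_\ast-1}\bigr]^{-d/q_\ast}$. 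Routine simplification — using the identity $(q_\ast-1)(1-\theta) = q-1$ — produces the factor $(\pi\hbar)^{-d(q-1)/q}\bigl((1-\theta)/(q-\theta)\bigr)^{(d/q)(1-\theta)}$, accompanied by $\|W\rho\|_{L^1}^{\theta/q}$. Finally, the normalization (\ref{eqIntro3A}) gives $\|W\rho\|_{L^1}\geq \int W\rho = 1$, and since $\theta/q<1$ one may replace $\|W\rho\|_{L^1}^{\theta/q}$ by $\|W\rho\|_{L^1}$ to arrive at the stated (slightly loosened) form.

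I do not expect any genuinely hard step: the proof is essentially bookkeeping. The only points to check carefully are the arithmetic of the change of variables converting $W$-norms to $A$-norms in the first part, and in the second part the equivalence between the parameter constraint $\theta>2-q$ and the inequality $q_\ast>2$ required to invoke Corollary \ref{CorollaryLieb1}. The single non-computational ingredient is the elementary observation that a genuine Wigner state satisfies $\|W\rho\|_{L^1}\geq 1$, which is what allows the exponent $\theta/q$ to be relaxed to $1$ in the final estimate.
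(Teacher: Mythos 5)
Your proof is correct and follows essentially the same route as the paper: part one is the change of variables from $W(f,g)$ to $A(f,g^-)$ via (\ref{eqPreliminary5}) combined with (\ref{eqNewInequalityRemark1}), and part two is the H\"older splitting with exponents $1/\theta$, $1/(1-\theta)$ followed by Corollary \ref{CorollaryLieb1} and the normalization bound $\|W\rho\|_{L^1}\geq 1$. The only cosmetic difference is that the paper absorbs the $L^1$ norm by dividing through by $\|W\rho\|_{L^1}^q$ before invoking $\|W\rho\|_{L^1}\geq 1$, whereas you relax the exponent $\theta/q$ to $1$ at the end; these are equivalent.
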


\begin{proof}
From (\ref{eqPreliminary5}) and (\ref{eqNewInequalityRemark1}), we immediately obtain (\ref{eqNewIneq1}).

Recall that $W \rho \in L^2 (\mathbb{R}^{2d})$ (cf.(\ref{eqLieb16B})). If, in addition, $W\rho \in L^1 (\mathbb{R}^{2d})$, then we also have $W \rho \in L^q (\mathbb{R}^{2d})$ for all $1 <q<2$.

Following the same steps as in (\ref{eqNewInequality2}), we have:
\begin{equation}
\begin{array}{c}
\|W \rho\|_q^q \leq \|W \rho\|_1^{\theta} ~\|W \rho \|_{\frac{q-\theta}{1-\theta}}^{q- \theta}\\
\\
\Leftrightarrow \frac{\|W \rho\|_q^q }{\|W \rho\|_1^q} \leq \left(\frac{\|W \rho \|_{\frac{q-\theta}{1-\theta}}}{\|W \rho\|_1}\right)^{q-\theta} \leq \|W \rho \|_{\frac{q-\theta}{1-\theta}}^{q-\theta}~.
\end{array}
\label{eqNewIneq3}
\end{equation}
In the last inequality, we used the fact that (cf.(\ref{eqIntro3A})):
\begin{equation}
\|W \rho \|_1 = \int_{\mathbb{R}^{2d}} |W \rho(z)| dz\geq \left|\int_{\mathbb{R}^{2d}} W \rho(z)dz \right|=1~, 
\label{eqNewIneq4}
\end{equation}
with equality if and only if $W \rho$ is everywhere non-negative.

Since $\frac{q-\theta}{1- \theta} >2$, we obtain from (\ref{eqLieb13}):
\begin{equation}
\frac{\|W \rho\|_{q}^q}{\|W \rho\|_{1}^q}  
\leq \frac{1}{(\pi \hbar)^{d(q-1)} \left(\frac{q- \theta}{1- \theta} \right)^{d(1- \theta)}}~,
\label{eqLieb17E} 
\end{equation}
and the result follows. 
\end{proof}
 
\section{Definition of quantum Wigner entropy and of Wigner-R\'enyi entropy}

For Wigner-positive states, Van Herstraeten and Cerf defined the Wigner quantum entropy to be their Shannon entropy. Here we wish to extend the definition to a larger class of states. 

In the spirit of \cite{Gosson1}, we define the \textit{Feichtinger states} to be the set of density matrices whose Wigner distribution $W \rho$ is absolutely integrable:
\begin{equation}
W \rho \in L^1 (\mathbb{R}^{2d})~.
\label{eqWignerEntro1}
\end{equation} 
This terminology comes from the fact that, for pure states, we have:
\begin{equation}
W f \in L^1 (\mathbb{R}^{2d}) \Leftrightarrow f \in \mathcal{S}_0 (\mathbb{R}^d)~,
\label{eqWignerEntro2}
\end{equation} 
where $\mathcal{S}_0 (\mathbb{R}^d)$ is Feichtinger's algebra \cite{Feichtinger}\footnote{See \cite{Jakobsen} for a recent review on some of the main properties of the Feichtinger algebra.}.  

We shall denote by $\mathcal{F}(\mathbb{R}^{2d})$ the convex set of Wigner functions associated with Feichtinger states. Obviously, all Wigner-positive states belong to $\mathcal{F}(\mathbb{R}^{2d})$ (see (\ref{eqNewIneq4})). 

In the sequel, we define two positive measures. The measure $\mu_{\rho}(z)$ is a probability measure, in terms of which we will state our main result. The measure $\nu_{\rho}(z)$, on the other hand, is not normalized, and it is only an auxiliary measure. The reason for introducing it resides in the fact that $\nu_{\rho}(z) \leq 1$. This condition will be important in the proof of Theorem \ref{TheoremEntropy}. Some properties of $\mu_{\rho}(z)$ and $\nu_{\rho}(z)$ will be stated in Proposition \ref{Proposition1} below. 

\begin{definition}\label{DefinitionMeasures}
Let $W \rho \in \mathcal{F}(\mathbb{R}^{2d})$. We define
\begin{equation}
\mu_{\rho}(z):= \frac{|W \rho(z)|}{\|W \rho \|_{L^1 (\mathbb{R}^{2d})}}~, \hspace{1 cm} \nu_{\rho}(z):=(\pi \hbar)^d \mu_{\rho} (z)= (\pi \hbar)^d \frac{|W \rho(z)|}{\|W \rho \|_{L^1 (\mathbb{R}^{2d})}}~.
\label{eqLieb17}
\end{equation} 
\end{definition}

\begin{definition}\label{DefinitionWignerEntropy}
Let $W \rho \in \mathcal{F}(\mathbb{R}^{2d})$. Its \textit{quantum Wigner entropy} is defined by:
\begin{equation}
\begin{array}{c}
S \left[W \rho\right]:= H \left[\mu_{\rho} \right]= -\int_{\mathbb{R}^{2d}} \mu_{\rho}(z) \ln \left(\mu_{\rho} (z) \right) dz= \\
\\
= -\int_{\mathbb{R}^{2d}} \frac{| W \rho (z)|}{\| W \rho \|_{L^1 (\mathbb{R}^{2d})}} \ln \left(\frac{| W \rho (z)|}{\| W \rho \|_{L^1 (\mathbb{R}^{2d})}}\right) dz~.
\end{array}
\label{eqWignerEntro3}
\end{equation}
\end{definition}
Notice that for a Wigner-positive state $\| W \rho \|_{L^1 (\mathbb{R}^{2d})}=1$, and hence $\mu_{\rho}=W \rho$. Thus, in the set of Wigner-positive states, the definition (\ref{eqWignerEntro3}) coincides with the definition of 
Van Herstraeten and Cerf.

Let us briefly discuss our definition of entropy. For all practical purposes this definition plays only an auxiliary r\^ole in the proof of our main result. It is not clear at all at this point whether it is a pertinent measure of entropy. 

\begin{itemize}
\item First of all notice that the measure $\mu_{\rho}$ does not have the correct (quantum) marginal distributions,
\begin{equation}
\int_{\mathbb{R}^d} \mu_{\rho}(x,p) dp \neq \sum_j p_j |f_j (x)|^2 ~, \hspace{1 cm} \int_{\mathbb{R}^d} \mu_{\rho}(x,p) dx \neq \sum_j p_j |\mathcal{F}_{\hbar} f_j (p)|^2 ~,
\label{eqWignerEntro4}
\end{equation}
unless $W \rho$ is a Wigner-positive state. This fact is illustrated in Example \ref{Example1} of the Appendix for the first excited state of the harmonic oscillator. It should however be emphasized that neither does the Husimi function appearing in the Wehrl entropy. 

\item Our measure of entropy is (in general) not concave for convex combinations of Wigner functions. The exception are of course convex combinations of Wigner-positive states.

It is not easy to construct explicit examples that illustrate this. The Shannon entropy is difficult to compute even for the Hermite functions (other than the ground state) \cite{Sanchez}. Without calculating the quantum Wigner entropy for some specific states, we are nevertheless able to prove that our entropy is not concave in general (see Example \ref{ExampleConcavity} in the Appendix). 
\end{itemize}

\vspace{0.3 cm}
On a more positive note, we have:

\begin{itemize}
\item The entropy satisfies the following extensive property. If the density matrix $\widehat{\rho}$ factorizes as the product $\widehat{\rho}=\widehat{\rho}_A \otimes \widehat{\rho}_B$ of two density matrices of subsystems $A$ and $B$, then, since $\mu_{\rho}=\mu_{\rho_A}\mu_{\rho_B}$, we have:
\begin{equation}
S \left[W \rho \right]= S \left[W \rho_A \right]+S \left[W \rho_B \right]~.
\label{eqExtensive}
\end{equation}

\item Let $\widehat{S}$ be a metaplectic operator (Gaussian unitaries) which projects onto the symplectic matrix $S \in Sp(2d; \mathbb{R})$ \cite{Gosson2}. Let $\widehat{\rho}$ be some density matrix and consider the map: $\widehat{\rho} \mapsto \widehat{\rho}^{\prime}=\widehat{S}\widehat{\rho}\widehat{S}^{\ast}$, which is another density matrix. Moreover, let $W \rho$, $W \rho^{\prime}$ be the corresponding Wigner distributions. It is a well known fact that the Weyl quantization satisfies the following symplectic covariance property:
\begin{equation}
W \rho^{\prime}(z)= W \rho (S^{-1}z)~.
\label{eqSymplcticCovariance1}
\end{equation}
From this identity and the fact that $\det S=1$, we conclude that our entropy is the same for $ \widehat{\rho}$ and $ \widehat{\rho}^{\prime}$: 
\begin{equation}
-\int_{\mathbb{R}^{2d}} \mu_{\rho^{\prime}} (z) \ln \left(\mu_{\rho^{\prime}} (z)\right) dz=-\int_{\mathbb{R}^{2d}} \mu_{\rho} (z) \ln \left(\mu_{\rho} (z)\right) dz~.
\label{eqSymplectic2}
\end{equation} 
Consequently, our entropy is symplectically invariant. This is an important property as advocated by the authors of \cite{Cerf1}.

\item Our definition is reminiscent of an entropy introduced by Lieb in \cite{Lieb1}, which shares many of the advantages and shortcomings of our entropy. Lieb considered the following entropy in terms of the radar-ambiguity function:
\begin{equation}
S \left[|A(f,g)|^2 \right]= - \int_{\mathbb{R}^d}\int_{\mathbb{R}^d} |A(f,g)(\tau, \omega)|^2 \ln \left(|A(f,g)(\tau, \omega)|^2\right) d \tau d \omega~,
\label{eqLiebentropy}
\end{equation}
where we recall that $\|A(f,g)\|_2^2 =1$, if $\|f\|_2=\|g\|_2=1$.
\end{itemize}

Finally, let us also define the Wigner-R\'enyi entropy associated with some $W\rho \in \mathcal{F} (\mathbb{R}^{2d})$.

Recall that the R\'enyi entropy for a Wigner function $W \rho$ is given by:
\begin{equation}
h_{\alpha} \left[W \rho \right]:=\frac{\alpha}{1- \alpha} \ln \left(\|W \rho\|_{L^{\alpha}(\mathbb{R}^{2d})} \right)~,
\label{eqAppendixB1}
\end{equation}
for $\alpha \in \left( \right.1, \infty \left. \right]$. 

Notice that this definition holds even if $W \rho$ is not Wigner-positive. 
\begin{definition}\label{DefinitionRenyi}
Let $\alpha \in \left( \right.1, \infty \left. \right]$. For $W\rho \in \mathcal{F} (\mathbb{R}^{2d})$ we define its Wigner-R\'enyi entropy by:
\begin{equation}
H_{\alpha} \left[W \rho \right]:=h_{\alpha} \left[\mu_{\rho} \right]=\frac{\alpha}{1- \alpha} \ln \left(\|\mu_{\rho}\|_{L^{\alpha}(\mathbb{R}^{2d})} \right)~.
\label{eqDefinitionRenyi1}
\end{equation}
\end{definition}

Of course, for Wigner-positive states, we have  $H_{\alpha} \left[W \rho \right]=h_{\alpha} \left[W \rho \right]$.

Let us also remark that:
\begin{equation}
\lim_{\alpha \to 1^+} H_{\alpha} \left[W \rho \right]= S \left[W \rho \right]~.
\label{eqDefinitionRenyi2}
\end{equation}

\section{The main results}

In this section we prove the main results concerning the lower bounds for the quantum Wigner entropy and for the Wigner-R\'enyi entropy. 

\subsection{Quantum Wigner entropy}

The properties of the measures $\mu_{\rho}, \nu_{\rho}$, which are relevant for the sequel, are stated in the following proposition.
\begin{proposition}\label{Proposition1}
The measures $\mu_{\rho}$ and $\nu_{\rho}$ satisfy the following:
\begin{enumerate}
\item $0 \leq \nu_{\rho} (z) \leq 1$, for all $z \in \mathbb{R}^{2d}$.

\item $H \left[\nu_{\rho} \right] = (\pi \hbar)^d \left[H \left[\mu_{\rho} \right] -d \ln (\pi \hbar) \right]$.

\item Let $q \in \left[2, \infty\right]$. Then:
\begin{equation}
\| \nu_{\rho} \|_{L^q (\mathbb{R}^{2d})} \leq \left(\frac{\pi \hbar}{q} \right)^{d/q} ~.
\label{eqLieb17A}
\end{equation} 
Moreover, for $q=2$, we have an equality, if and only if $\rho$ is a pure state and, when $q>2$, if and only if $W \rho$ is of the form (\ref{eqLieb14}).

\item Let $1 \leq q <2$. If $q=1$, then:
\begin{equation}
\| \nu_{\rho} \|_{L^1 (\mathbb{R}^{2d})}=(\pi \hbar)^d~.
\label{eqLieb17AA}
\end{equation} 
Alternatively, if $1 < q < 2$, then:
\begin{equation}
\| \nu_{\rho} \|_{L^q (\mathbb{R}^{2d})} < \left[ \pi \hbar \left(\frac{1- \theta}{q- \theta} \right)^{1- \theta} \right]^{d/q} ~.
\label{eqLieb17B} 
\end{equation} 
for all $2-q < \theta <1$. Moreover, there is no Wigner distribution $W \rho$ for which we have an equality in (\ref{eqLieb17B}).
\end{enumerate}
\end{proposition}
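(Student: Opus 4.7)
The four claims are essentially bookkeeping consequences of results already at our disposal, so the plan is to address them in order, highlighting which earlier result bears the weight of each step.

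For item 1, the pointwise bound $0\le \nu_\rho(z)\le 1$ follows by combining two facts already recorded: the uniform bound $\|W\rho\|_{L^\infty}\le (\pi\hbar)^{-d}$ from (\ref{eq14A}), and the inequality $\|W\rho\|_{L^1}\ge 1$ from (\ref{eqNewIneq4}). Plugging these into the definition gives $\nu_\rho(z)=(\pi\hbar)^d|W\rho(z)|/\|W\rho\|_{L^1}\le (\pi\hbar)^d\cdot(\pi\hbar)^{-d}/1=1$. Item 2 is a direct computation: since $\nu_\rho=(\pi\hbar)^d\mu_\rho$ and $\mu_\rho$ integrates to $1$, expanding $\ln((\pi\hbar)^d\mu_\rho)=d\ln(\pi\hbar)+\ln\mu_\rho$ inside $-\int\nu_\rho\ln\nu_\rho\,dz$ yields the claimed identity.

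For item 3, write $\|\nu_\rho\|_{L^q}=(\pi\hbar)^d\|W\rho\|_{L^q}/\|W\rho\|_{L^1}$ and apply Corollary \ref{CorollaryLieb1}, which gives $\|W\rho\|_{L^q}\le q^{-d/q}(\pi\hbar)^{-d(q-1)/q}$ for $q\ge 2$; dividing by $\|W\rho\|_{L^1}\ge 1$ and multiplying by $(\pi\hbar)^d$ collapses the powers of $\pi\hbar$ to $(\pi\hbar/q)^{d/q}$. Equality therefore requires \emph{both} saturating Corollary \ref{CorollaryLieb1} and having $\|W\rho\|_{L^1}=1$ (i.e.\ $W\rho\ge 0$). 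For $q>2$, saturation of the corollary forces $W\rho$ to be of the Gaussian form (\ref{eqLieb14}), which is automatically non-negative, so both conditions are compatible; for $q=2$, the first condition characterises pure states and the compatibility with $W\rho\ge 0$ is recorded as in the statement.

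Item 4 splits into two sub-cases. For $q=1$, $\|\nu_\rho\|_{L^1}=(\pi\hbar)^d\int\mu_\rho\,dz=(\pi\hbar)^d$ since $\mu_\rho$ is a probability measure. For $1<q<2$, the same rewriting $\|\nu_\rho\|_{L^q}=(\pi\hbar)^d\|W\rho\|_{L^q}/\|W\rho\|_{L^1}$ is combined with the Wigner estimate (\ref{eqNewIneq2}) from Proposition \ref{PropositionNewIneq}, which gives $\|W\rho\|_{L^q}\le(\pi\hbar)^{-d(1-1/q)}\bigl(\tfrac{1-\theta}{q-\theta}\bigr)^{(d/q)(1-\theta)}\|W\rho\|_{L^1}$; the $\|W\rho\|_{L^1}$ factors cancel and the powers of $\pi\hbar$ again collapse to the stated $\bigl[\pi\hbar((1-\theta)/(q-\theta))^{1-\theta}\bigr]^{d/q}$. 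The main subtlety — and really the only non-clerical step — is the \emph{strictness} of this bound: equality in the Hölder step (\ref{eqNewIneq3}) that underlies (\ref{eqNewIneq2}) would, by the standard equality case of Hölder, force $|W\rho(z)|^{(q-1)\theta}$ to be constant almost everywhere, which is impossible because $W\rho\in L^2(\mathbb{R}^{2d})$ is not identically zero and $(q-1)\theta>0$ in the admissible range; this is the same argument as in Remark \ref{RemarkNewInequality1}, applied to $W\rho$ rather than $A(f,g)$, and it guarantees that no Wigner distribution can saturate (\ref{eqLieb17B}).
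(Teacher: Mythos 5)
Your proof is correct and follows essentially the same route as the paper's: items 1--2 by the same direct computations from (\ref{eq14A}) and (\ref{eqNewIneq4}), item 3 by rewriting $\|\nu_\rho\|_{L^q}$ in terms of $\|W\rho\|_{L^q}/\|W\rho\|_{L^1}$ and invoking Corollary \ref{CorollaryLieb1} together with the equality analysis of both factors, and item 4 via (\ref{eqNewIneq2}) with strictness traced back, exactly as in Remark \ref{RemarkNewInequality1}, to the impossibility of saturating the H\"older step for $\theta\neq 0$, $q\neq 1$. No substantive differences to report.
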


\begin{proof}
\begin{enumerate}
\item Since $\|W \rho \|_{L^{\infty} (\mathbb{R}^{2d})} \leq \frac{1}{(\pi \hbar)^d}$ and $\|W \rho \|_{L^1 (\mathbb{R}^{2d})} \geq 1$, we conclude that
\[
\nu_{\rho} (z) =(\pi \hbar)^d \frac{|W \rho(z)|}{\|W \rho \|_{L^1 (\mathbb{R}^{2d})}} \leq (\pi \hbar)^d \frac{\|W \rho \|_{L^{\infty} (\mathbb{R}^{2d})}}{\|W \rho \|_{L^1 (\mathbb{R}^{2d})}} \leq 1~,
\]
for all $z \in \mathbb{R}^{2d}$.

\item We have:
\[
\begin{array}{c}
H \left[\nu_{\rho} \right] = - \int_{\mathbb{R}^{2d}} \nu_{\rho} (z) \ln \left(\nu_{\rho}(z) \right) = - \int_{\mathbb{R}^{2d}} (\pi \hbar)^d\mu_{\rho} (z) \ln \left((\pi \hbar)^d\mu_{\rho}(z) \right)=\\
\\
= - (\pi \hbar)^d \ln \left((\pi \hbar)^d\right) \int_{\mathbb{R}^{2d}} \mu_{\rho} (z)-(\pi \hbar)^d\int_{\mathbb{R}^{2d}} \mu_{\rho} (z) \ln \left(\mu_{\rho}(z) \right) =\\
\\
= - d(\pi \hbar)^d \ln \left(\pi \hbar\right) + (\pi \hbar)^d H \left[\mu_{\rho} \right]~.
\end{array}
\]

\item If $q \geq 2$, we have from (\ref{eqLieb13},\ref{eqNewIneq4}):
\[
\|\nu_{\rho} \|_q^q = (\pi \hbar)^{qd}\frac{\|W \rho \|_{L^q (\mathbb{R}^{2d})}^q}{\|W \rho \|_{L^1 (\mathbb{R}^{2d})}^q}\leq \left( \frac{\pi \hbar}{q} \right)^d~.
\]
Notice that, for $q>2$, the upper bound $\left( \frac{1}{q (\pi \hbar)^{q-1}} \right)^d$ of the numerator is attained, if and only if $W \rho$ is of the form (\ref{eqLieb14}). Moreover, $\|W \rho \|_{L^1 (\mathbb{R}^{2d})}=1$, if and only if $ W \rho(z)$ is everywhere non-negative, and the result follows. 

If $q=2$, then we have an equality if and only if $\rho$ is a pure state (cf. remark \ref{RemarkPurity}).

\item If $q=1$, the result is obvious. Alternatively, suppose that $1 < q <2$. Let $2-q <\theta<1$.  Then the result follows directly from (\ref{eqNewIneq2}).

The fact that there is no $\rho$ for which we have an equality in (\ref{eqLieb17B}) can be proved following the arguments of Remark \ref {RemarkNewInequality1}. Indeed, an equality in (\ref{eqLieb17B}) would follow from an equality in (\ref{eqNewIneq3}), which is not possible unless $\theta=0$ or $q=1$. 
\end{enumerate}

This completes the proof.
\end{proof}

\vspace{0.3 cm}
\noindent
We are now in a position to state our main result for the quantum Wigner entropy:
\begin{theorem}\label{TheoremEntropy}
We have:
\begin{equation}
S \left[W\rho \right] \geq  d \ln (2\pi \hbar)~,
\label{eqEntropy1}
\end{equation}
for all $W \rho \in \mathcal{F}(\mathbb{R}^{2d})$.
\end{theorem}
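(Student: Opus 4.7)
The strategy is to bound the Shannon entropy of the probability density $\mu_\rho$ below by a R\'enyi-type entropy at exponent two, and then invoke the $L^2$ bound from Proposition~\ref{Proposition1} (which itself rests on Corollary~\ref{CorollaryLieb1} and on $\|W\rho\|_1\ge 1$ from (\ref{eqNewIneq4})). Concretely, I would first record the elementary Jensen inequality
\begin{equation*}
H[\mu]=-\int_{\mathbb{R}^{2d}}\mu\ln\mu\,dz \;\ge\; -\ln\!\int_{\mathbb{R}^{2d}}\mu^{2}\,dz \;=\; -\ln\|\mu\|_{L^{2}(\mathbb{R}^{2d})}^{2},
\end{equation*}
valid for any probability density $\mu\in L^{2}$. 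This is the standard observation that the Shannon entropy dominates the $\alpha=2$ R\'enyi entropy, obtained by applying the concavity of $\ln$ to the random variable $\mu(Z)$ under the law $\mu(z)\,dz$. For $\mu=\mu_\rho$, the hypothesis $W\rho\in\mathcal{F}(\mathbb{R}^{2d})$ combined with $W\rho\in L^{2}(\mathbb{R}^{2d})$ (cf.\ Remark~\ref{RemarkPurity}) ensures $\mu_\rho\in L^{2}$, so the bound is non-trivial.

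Next I would bound $\|\mu_\rho\|_2^2$. By the definition of $\mu_\rho$, Corollary~\ref{CorollaryLieb1} with $q=2$, and the lower bound $\|W\rho\|_1\ge 1$ from (\ref{eqNewIneq4}),
\begin{equation*}
\|\mu_\rho\|_{L^{2}(\mathbb{R}^{2d})}^{2}
=\frac{\|W\rho\|_{L^{2}(\mathbb{R}^{2d})}^{2}}{\|W\rho\|_{L^{1}(\mathbb{R}^{2d})}^{2}}
\;\le\; \|W\rho\|_{L^{2}(\mathbb{R}^{2d})}^{2}
\;\le\; \frac{1}{(2\pi\hbar)^{d}}.
\end{equation*}
This is exactly the content of Proposition~\ref{Proposition1}, item 3 at $q=2$, rescaled from $\nu_\rho$ back to $\mu_\rho$ via $\nu_\rho=(\pi\hbar)^{d}\mu_\rho$.

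Combining the two inequalities,
\begin{equation*}
S[W\rho]=H[\mu_\rho]\;\ge\;-\ln\|\mu_\rho\|_{L^{2}(\mathbb{R}^{2d})}^{2}\;\ge\;d\ln(2\pi\hbar),
\end{equation*}
which is the claim. Equivalently, one can work through $\nu_\rho$ and use item~2 of Proposition~\ref{Proposition1} to convert $H[\nu_\rho]\ge d(\pi\hbar)^{d}\ln 2$ into the stated bound on $H[\mu_\rho]$; the bound $\nu_\rho\le 1$ serves here as a convenient normalization but the essential ingredients are the Moyal $L^{2}$ identity and the pointwise estimate $\|W\rho\|_\infty\le(\pi\hbar)^{-d}$ that sit behind Corollary~\ref{CorollaryLieb1}.

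There is essentially no obstacle at this stage of the paper: all the real work has been absorbed into Proposition~\ref{Proposition1}. The only subtlety worth checking is that the Jensen step is legitimate, i.e.\ that $\mu_\rho\ln\mu_\rho$ is integrable with the convention $0\ln 0=0$, which is standard since $\mu_\rho\in L^{1}\cap L^{\infty}$. The strictness of the inequality (alluded to in the introduction, $S[W\rho]>d\ln(2\pi\hbar)$) would require chasing the equality cases of Jensen (forcing $\mu_\rho$ to be constant on its support), of Corollary~\ref{CorollaryLieb1} at $q=2$ (forcing purity), and of $\|W\rho\|_1=1$ (forcing $W\rho\ge 0$), which are jointly incompatible with $\mu_\rho\in L^{1}$; this refinement is not needed for the stated $\ge$ version of the theorem.
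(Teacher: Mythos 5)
Your proof is correct, and it takes a genuinely different and much shorter route than the paper's. The paper follows Lieb's proof of the Wehrl conjecture: it sets $\theta=1-\epsilon$, $q=1+\epsilon+\epsilon^2$, applies the pointwise inequality $0\le\tfrac{1}{\sigma}X(1-X^{\sigma})\le -X\ln X$ to $X=\nu_{\rho}(z)\in[0,1]$, and extracts the entropy bound from the behaviour of $\|\nu_{\rho}\|_q^q$ as $q\to 1^+$; this is what forces it to develop the interpolation inequality of Theorem \ref{TheoremNewInequality1} and Proposition \ref{PropositionNewIneq} on the range $1<q<2$ (item 4 of Proposition \ref{Proposition1}). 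Your argument bypasses all of that machinery: the Jensen step $H[\mu_{\rho}]\ge -\ln\|\mu_{\rho}\|_2^2$ is exactly the monotonicity $H_1\ge H_2$ of R\'enyi entropies, and the $L^2$ bound needs only the Moyal identity (purity $\le 1$) together with $\|W\rho\|_1\ge 1$, i.e.\ the elementary $q=2$ case of Corollary \ref{CorollaryLieb1}. Both routes land on the same constant $d\ln(2\pi\hbar)$ --- the paper itself notes that the $\alpha\to 1^+$ limit of Theorem \ref{TheoremRenyi} reproduces Theorem \ref{TheoremEntropy} --- so nothing is lost by your shortcut here; what the longer route buys is the natural framework for any attempt to improve the constant toward the conjectured $d(1+\ln(\pi\hbar))$, since the $q\downarrow 1$ family of $L^q$ bounds is what genuinely computes the Shannon entropy, whereas the single $\alpha=2$ bound cannot be sharpened further. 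Two small points of hygiene: for the Jensen step it suffices to observe that the negative part of $-\mu_{\rho}\ln\mu_{\rho}$ is integrable because $\{\mu_{\rho}>1\}$ has Lebesgue measure at most $1$ and $\mu_{\rho}\le(\pi\hbar)^{-d}$ there, so $H[\mu_{\rho}]$ is well defined in $(-\infty,+\infty]$ and the claimed inequality is trivial when it is $+\infty$ ($\mu_{\rho}\in L^1\cap L^{\infty}$ alone does not control the positive part, but it does not need to); and in your closing remark on strictness the contradiction comes not from $\mu_{\rho}\in L^1$ per se but from the fact that equality at $q=2$ and in $\|W\rho\|_1=1$ forces, via Hudson's theorem, a Gaussian pure state, whose $\mu_{\rho}$ is not constant on a set of finite measure as Jensen equality would require --- in any case the theorem as stated only asserts $\ge$.
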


\begin{proof}
We shall follow closely the techniques used by Lieb to prove Wehrl's conjecture \cite{Lieb2}. Let 
\begin{equation}
\theta=1 -\epsilon~, ~ q =1+ \epsilon + \epsilon^2~, 
\label{eqEntropy2}
\end{equation}
where $\epsilon >0$ is assumed to be sufficiently small (we will eventually take the limit $\epsilon \to 0$), so that:
\begin{equation}
0< 2-q < \theta <1~.
\label{eqEntropy3}
\end{equation}
Let
\begin{equation}
K_{\epsilon}:= \frac{1}{\epsilon + \epsilon^2} \left(J_{\rho} (1)- J_{\rho} (1+ \epsilon+ \epsilon^2 ) \right) ~, 
\label{eqEntropy4}
\end{equation}
where we denote by $J_{\rho} (q)$ the functional:
\begin{equation}
J_{\rho} (q):= \| \nu_{\rho}\|_q^q~.
\label{eqEntropy4A}
\end{equation}
In view of (\ref{eqLieb17B},\ref{eqEntropy2},\ref{eqEntropy3}), we have:
\begin{equation}
\begin{array}{c}
K_{\epsilon} >  \frac{1}{\epsilon + \epsilon^2} \left[(\pi \hbar)^d- \frac{(\pi \hbar)^d}{(2+ \epsilon)^{\epsilon d}} \right] = \frac{(\pi \hbar)^d}{\epsilon + \epsilon^2}\left[1- \frac{1}{(2+ \epsilon)^{\epsilon d}} \right]\\
\\
= (\pi \hbar)^dd \ln (2) + \mathcal{O} (\epsilon)~,
\end{array}
\label{eqEntropy5}
\end{equation}
where we used the fact that $J_{\rho} (1)= (\pi \hbar)^d$.

Now consider the inequalities \cite{Lieb2}:
\begin{equation}
0 \leq \frac{1}{\sigma} X (1- X^{\sigma}) \leq - X \ln (X)~,
\label{eqEntropy6}
\end{equation}
which hold for $\sigma >0$ and all $X \in \left[0,1 \right]$ (for $X=0$, we set $0 ln(0)=0$).

If we let $\sigma= \epsilon+\epsilon^2$ and $X= \nu_{\rho}(z)$, then:
\begin{equation}
0 \leq \frac{1}{\epsilon+\epsilon^2}\nu_{\rho} (z) \left(1- \left(\nu_{\rho} (z) \right)^{\epsilon+\epsilon^2} \right) \leq - \nu_{\rho}(z) \ln \left(\nu_{\rho} (z) \right)~.
\label{eqEntropy7}
\end{equation}
Upon integration over $z$ on $\mathbb{R}^{2d}$, we obtain:
\begin{equation}
\begin{array}{c}
0 \leq \frac{1}{\epsilon+\epsilon^2} \left[J_{\rho}(1)-J_{\rho}(1+\epsilon+\epsilon^2) \right] \leq  - \int_{\mathbb{R}^{2d}}\nu_{\rho}(z) \ln \left(\nu_{\rho} (z) \right) dz\\
\\
\Leftrightarrow 0 \leq K_{\epsilon} \leq H \left[\nu_{\rho} \right]~,
\end{array}
\label{eqEntropy8}
\end{equation}
which is valid for all $\epsilon>0$. When we take the limit $\epsilon \to 0$ in (\ref{eqEntropy8}) and use (\ref{eqEntropy5}), we obtain:
\begin{equation}
H \left[\nu_{\rho} \right] \geq (\pi \hbar)^d d \ln (2)~.
\label{eqEntropy9}
\end{equation}
Finally, from Proposition \ref{Proposition1}.2, we recover (\ref{eqEntropy1}).
\end{proof}

\subsection{Wigner-R\'enyi entropy}

We now address the conjecture stated in  the Appendix A of \cite{Cerf1} for the Wigner-R\'enyi entropy of Wigner-positive states in dimension $d=1$.

The authors of \cite{Cerf1} conjectured that:
\begin{equation}
h_{\alpha} \left[W \rho \right] \geq \ln (\pi \hbar)+ \frac{\ln (\alpha)}{\alpha-1}~, 
\label{eqAppendixB2}
\end{equation}
and that the minimum can only be attained by Gaussian pure states.

Using our results we can also make a small contribution to this problem. 

\begin{theorem}\label{TheoremRenyi}
Let $W \rho \in \mathcal{F}(\mathbb{R}^{2d})$. 

\begin{itemize}
\item Suppose that $2 \leq \alpha \leq \infty$. Then
\begin{equation}
H_{\alpha} \left[W \rho \right] \geq d \left(\ln (\pi \hbar)+ \frac{\ln (\alpha)}{\alpha-1}\right)~, 
\label{eqAppendixB2A}
\end{equation}
and the minimum can only be attained by Gaussian pure states.

\item For $1< \alpha < 2$, we have:
\begin{equation}
 H_{\alpha} \left[W \rho \right] \geq d \ln (2\pi \hbar) ~.
\label{eqAppendixB7}
\end{equation}
Moreover, there is no Wigner function $W \rho$ for which the lower bound of (\ref{eqAppendixB7}) can be attained.
\end{itemize}
\end{theorem}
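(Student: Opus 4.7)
The plan for both ranges is to bound $\|\mu_\rho\|_\alpha = \|W\rho\|_\alpha/\|W\rho\|_1$ from above and then to exploit $\alpha/(1-\alpha)<0$ in the identity $H_\alpha[W\rho] = \frac{\alpha}{1-\alpha}\ln\|\mu_\rho\|_\alpha$ to convert the upper bound into the desired lower bound on the entropy.

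For $2\leq\alpha\leq\infty$ I would combine Corollary \ref{CorollaryLieb1} (which gives $\|W\rho\|_\alpha\leq(\alpha(\pi\hbar)^{\alpha-1})^{-d/\alpha}$, and the limiting case $\|W\rho\|_\infty\leq(\pi\hbar)^{-d}$) with the normalization bound $\|W\rho\|_1\geq 1$ from (\ref{eqNewIneq4}). Division produces $\|\mu_\rho\|_\alpha\leq(\alpha(\pi\hbar)^{\alpha-1})^{-d/\alpha}$, and applying $\frac{\alpha}{1-\alpha}\ln(\cdot)$ yields exactly $d\bigl(\ln(\pi\hbar)+\ln\alpha/(\alpha-1)\bigr)$. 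For the equality case I read off the saturation clauses of the two input inequalities: $\|W\rho\|_1=1$ forces $W\rho\geq 0$, while equality in Corollary \ref{CorollaryLieb1} forces $W\rho$ to be a Gaussian of the form (\ref{eqLieb14}) when $\alpha>2$, or $\rho$ to be a pure state when $\alpha=2$. A Gaussian of the form (\ref{eqLieb14}) is automatically positive, and by Hudson's theorem a pure state with non-negative Wigner function is necessarily Gaussian, so in both cases the minimizers are precisely the Gaussian pure states.

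For $1<\alpha<2$ the key step is an interpolation between $L^1$ and $L^2$. Applying Hölder's inequality with conjugate exponents $1/(2-\alpha)$ and $1/(\alpha-1)$ to the factorization $|W\rho|^\alpha=|W\rho|^{2-\alpha}\cdot|W\rho|^{2(\alpha-1)}$ yields
\[
\|W\rho\|_\alpha^\alpha\ \leq\ \|W\rho\|_1^{\,2-\alpha}\,\|W\rho\|_2^{\,2(\alpha-1)}~,
\]
which is the endpoint $\theta=2-\alpha$ of the Hölder step used in Proposition \ref{Proposition1}. Dividing by $\|W\rho\|_1^\alpha$, noting that $\|W\rho\|_1^{2(1-\alpha)}\leq 1$ because $1-\alpha<0$ and $\|W\rho\|_1\geq 1$, and then invoking the Moyal/purity bound $\|W\rho\|_2\leq(2\pi\hbar)^{-d/2}$ from Corollary \ref{CorollaryLieb1} at $q=2$, I obtain $\|\mu_\rho\|_\alpha^\alpha\leq(2\pi\hbar)^{-d(\alpha-1)}$; inversion produces $H_\alpha[W\rho]\geq d\ln(2\pi\hbar)$.

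The most delicate point, and where I expect the main obstacle, is the non-attainment assertion. Saturation of the above chain would require \emph{simultaneously}: (i) equality in the Hölder step, which forces $|W\rho|$ and $|W\rho|^2$ to be proportional almost everywhere, so $|W\rho|$ takes at most the two values $\{0,c\}$; (ii) $\|W\rho\|_1=1$, i.e.\ $W\rho\geq 0$; and (iii) $\|W\rho\|_2=(2\pi\hbar)^{-d/2}$, i.e.\ $\rho$ a pure state. By Hudson's theorem, (ii) together with (iii) forces $W\rho$ to be a strictly positive Gaussian of the form (\ref{eqLieb14}), which assumes a continuum of values and is therefore incompatible with the two-valued structure demanded by (i). This contradiction rules out any $W\rho\in\mathcal{F}(\mathbb{R}^{2d})$ saturating the bound $d\ln(2\pi\hbar)$ in (\ref{eqAppendixB7}).
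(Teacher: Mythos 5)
Your argument is correct. The first bullet ($2\le\alpha\le\infty$) follows the paper's proof essentially verbatim: Corollary \ref{CorollaryLieb1} combined with $\|W\rho\|_1\ge 1$, with the equality analysis split into $\alpha>2$ (Gaussian maximizers of Lieb's inequality) and $\alpha=2$ (purity plus Wigner-positivity plus Hudson's theorem). For $1<\alpha<2$, however, you take a genuinely different and more economical route. The paper keeps the interpolation parameter $\theta\in(2-\alpha,1)$ free, invokes Proposition \ref{PropositionNewIneq} (which rests on Lieb's Theorem \ref{TheoremLieb1} at the exponent $\frac{\alpha-\theta}{1-\theta}>2$), and then observes that the resulting family of lower bounds $d\left[\ln(\pi\hbar)+\xi(\theta)\right]$ has supremum $d\ln(2\pi\hbar)$ as $\theta\to(2-\alpha)^{+}$. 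You go directly to the endpoint $\theta=2-\alpha$, where the interpolation degenerates to the elementary Lyapunov inequality $\|W\rho\|_\alpha^\alpha\le\|W\rho\|_1^{2-\alpha}\|W\rho\|_2^{2(\alpha-1)}$, and the only spectral input needed is the Moyal/purity bound $\|W\rho\|_2\le(2\pi\hbar)^{-d/2}$ rather than the full strength of Lieb's inequality; the constants agree. Your treatment of non-attainment is also more complete than the paper's: since your bound comes from a single chain of inequalities, saturation would force simultaneous equality in H\"older (so $|W\rho|$ essentially two-valued), in $\|W\rho\|_1\ge 1$ (Wigner-positivity) and in the purity bound (pure state), and Hudson's theorem then yields a Gaussian, which is incompatible with the two-valued structure. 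The paper instead appeals to the strictness of each fixed-$\theta$ inequality (Remark \ref{RemarkNewInequality1}, Proposition \ref{Proposition1}.4), which does not by itself rule out attainment of the supremum over $\theta$; your single-exponent argument closes that gap.
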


\begin{remark}\label{RemarkRenyi}
We note that our bound in eq.(\ref{eqAppendixB7}) is not very distant from the conjectured bound given that:
\begin{equation}
\ln(2)< \frac{\ln (\alpha)}{\alpha-1} <1~,
\label{eqRemarkRenyi}
\end{equation}
for all $1< \alpha <2$.
\end{remark}

\begin{proof}
\begin{itemize}
\item We start with the case $\alpha \geq 2$ and use Corollary \ref{CorollaryLieb1}:
\begin{equation}
 \|\mu_{\rho} \|_{L^{\alpha} (\mathbb{R}^{2d})}= \frac{\|W \rho \|_{L^{\alpha} (\mathbb{R}^{2d})}}{\|W \rho  \|_{L^{1} (\mathbb{R}^{2d})}} \leq \|W \rho \|_{L^{\alpha} (\mathbb{R}^{2d})} \leq \left(\frac{1}{\alpha (\pi \hbar)^{\alpha -1}} \right)^{d/{\alpha}}~.
\label{eqAppendixB4}
\end{equation}
It follows that:
\begin{equation}
H_{\alpha} \left[W \rho \right] \geq d \left(\ln (\pi \hbar)+ \frac{\ln (\alpha)}{\alpha -1} \right)~,
\label{eqAppendixB5}
\end{equation}
which yields (\ref{eqAppendixB2}) for $d=1$.

If $\alpha >2$, then the second inequality in (\ref{eqAppendixB4}) becomes an equality only for Gaussian pure states, in which case the first inequality also becomes an equality. 

If $\alpha =2$, the second inequality in (\ref{eqAppendixB4}) becomes an equality if and only if $W \rho$ is a pure state. The first inequality becomes an equality if and only if $W \rho$ is Wigner-positive. By Hudson's theorem a pure state is Wigner-positive if and only if it is a Gaussian state.

This proves the validity of the Van Herstraeten-Cerf conjecture for the Wigner-R\'enyi entropy in the case $\alpha \in \left[2, \infty\right]$.  

\vspace{0.2 cm}
\item Next consider the case $1 < \alpha <2$. From Proposition \ref{PropositionNewIneq} we have:
\begin{equation}
\|\mu_{\rho} \|_{L^{\alpha} (\mathbb{R}^{2d})}\leq \frac{1}{(\pi \hbar)^{d(1-1/ \alpha)}} \left(\frac{1- \theta}{\alpha - \theta}\right)^{\frac{d}{\alpha}(1-\theta)}~. 
\label{eqAppendixB6}
\end{equation}
It follows that:
\begin{equation}
H_{\alpha} \left[W \rho\right] \geq d \left[\ln(\pi \hbar)+\frac{1-\theta}{1- \alpha}\ln \left(\frac{1-\theta}{\alpha - \theta}\right)\right]~,
\label{eqAppendixC1}
\end{equation}
for all $2- \alpha<\theta<1$.

By inspection, we conclude that the function $\xi : \left(2- \alpha , 1 \right) \to \mathbb{R}$, 
\begin{equation}
\xi(\theta)=\frac{1-\theta}{1- \alpha}\ln \left(\frac{1-\theta}{\alpha - \theta}\right)~,
\label{eqAppendixC1}
\end{equation}
has supremum $\ln(2)$ (at $\theta \to (2-\alpha)^+$), and (\ref{eqAppendixB7}) follows.

This proves that in the case $1 < \alpha <2$ the Wigner-R\'enyi entropy is also bounded below. However, as in the case of the quantum Wigner entropy, we cannot obtain the conjectured bound.
\end{itemize}
\end{proof}

Before we conclude, let us mention that the result for the quantum Wigner entropy can be obtained in the case $\alpha \to 1$. This is of course expected since this limit applied to the R\'enyi entropy yields the Shannon entropy: $\lim_{\alpha \to 1^+}H_{\alpha} \left[W \rho \right]=S \left[W \rho \right]$.

To see this, set as in (\ref{eqEntropy2}):
\begin{equation}
\theta =1 - \epsilon~, ~ \alpha = 1 + \epsilon + \epsilon^2~,
\label{eqAppendixB8}
\end{equation}
for sufficiently small $\epsilon >0$.

We obtain:
\begin{equation}
 H_{1 + \epsilon + \epsilon^2} \left[W \rho \right] \geq d \left(\ln (\pi \hbar) + \frac{1}{1+ \epsilon}\ln \left(2+\epsilon \right) \right)~.
\label{eqAppendixB9}
\end{equation}
If we take the limit $\epsilon \to 0^+$, we obtain:
\begin{equation}
S \left[W \rho \right] \geq d \ln (2 \pi \hbar)~,
\label{eqAppendixB10}
\end{equation}
in accordance with (\ref{eqEntropy1}).

\section*{Acknowledgements}

The authors would like to thank an anonymous referee for useful suggestions and for pointing out to us that Proposition \ref{PropositionNewIneq} would be useful to address the Wigner-R\'enyi conjecture of \cite{Cerf1} and derive the bound for the quantum Wigner entropy in the limit $\alpha \to 1^+$.

\section*{Appendix}

In this appendix we give examples which illustrate (i) the fact that $\mu_{\rho}$ does not yield in general the correct marginals, and (ii) the absence of concavity of the quantum Wigner entropy for states which are not Wigner-positive. 

\begin{example}\label{Example1}
Let 
\begin{equation}
h_1(x)=\frac{1}{\sqrt[4]{\pi \hbar}} \sqrt{\frac{2}{\hbar}}~x e^{- \frac{x^2}{2 \hbar}}
\label{eqExample11}
\end{equation}
denote the Hermite function, which is the first excited state of the harmonic oscillator in $d=1$ (for $m=\omega=1$). The associated Wigner function is given by:
\begin{equation}
W h_1(z)=\frac{1}{\pi \hbar} \left(\frac{2z^2}{\hbar}-1\right) e^{-\frac{z^2}{\hbar}}~.
\label{eqExample11}
\end{equation}
Upon integration of the momentum we obtain:
\begin{equation}
\int_{\mathbb{R}}W h_1(x,p)~ dp= \frac{2x^2}{\hbar \sqrt{\pi \hbar}} e^{- \frac{x^2}{\hbar}} =|h_1(x)|^2~,
\label{eqExample12}
\end{equation}
which is the correct marginal density function for the position variable.

Next we compute $\mu_{h_1}(z)=\frac{|Wh_1(z)|}{\|Wh_1\|_{L^1(\mathbb{R})}}$. 

Using polar coordinates $x=r\cos(\theta)$, $y=r \sin (\theta)$, we obtain:
\begin{equation}
\begin{array}{c}
\|Wh_1\|_{L^1(\mathbb{R})}= \int_{\mathbb{R}}|Wh_1(z)|dz=\\
\\
=\frac{2}{\hbar} \int_0^{\sqrt{\hbar /2}} r\left(1- \frac{2r^2}{\hbar} \right) e^{- \frac{r^2}{\hbar}} dr+\frac{2}{\hbar} \int_{\sqrt{\hbar /2}}^{\infty} r\left(\frac{2r^2}{\hbar}-1 \right) e^{- \frac{r^2}{\hbar}} dr=\\
\\
=\frac{4}{\sqrt{e}}-1~.
\end{array}
\label{eqExample13}
\end{equation}
Consequently:
\begin{equation}
\mu_{h_1}(z)=\frac{|Wh_1(z)|}{\|Wh_1\|_{L^1(\mathbb{R})}}=\frac{e^{-\frac{z^2}{\hbar}}}{\pi \hbar\left(4/\sqrt{e}-1 \right)} \left|\frac{2z^2}{\hbar}-1 \right|~.
\label{eqExample14}
\end{equation}
Let us now evaluate the integral of (\ref{eqExample14}) over $p$. If $|x| \geq \sqrt{\frac{\hbar}{2}}$, then:
\begin{equation}
\begin{array}{c}
\int_{\mathbb{R}} \mu_{h_1} (x,p) dp= \frac{e^{-\frac{x^2}{\hbar}}}{\pi \hbar\left(4/\sqrt{e}-1 \right)}\int_{\mathbb{R}} \left(\frac{2x^2}{\hbar}+\frac{2p^2}{\hbar} -1 \right) e^{-\frac{p^2}{\hbar}}dp =\\
\\
= \frac{x^2 e^{-\frac{x^2}{\hbar}}}{ \hbar \sqrt{\pi \hbar}\left(1/\sqrt{e}-1/4 \right)}~.
\end{array}
\label{eqExample15}
\end{equation}
At this point it is already evident that we do not obtain the correct marginal distribution (\ref{eqExample12}).

For $|x| < \sqrt{\frac{\hbar}{2}}$ the integral becomes more intricate:
\begin{equation}
\begin{array}{c}
\int_{\mathbb{R}} \mu_{h_1} (x,p) dp= \frac{e^{-\frac{x^2}{\hbar}}}{\pi \hbar\left(4/\sqrt{e}-1 \right)} \left\{\int_{|p| < \sqrt{\hbar/2-x^2}}  \left(1-\frac{2x^2}{\hbar}-\frac{2p^2}{\hbar}  \right) e^{-\frac{p^2}{\hbar}}dp + \right. \\
\\
+ \left. \int_{|p| \geq \sqrt{\hbar/2-x^2}}  \left(\frac{2x^2}{\hbar}+\frac{2p^2}{\hbar}  -1 \right) e^{-\frac{p^2}{\hbar}}dp \right\}  =\\
\\
=\frac{e^{-\frac{x^2}{\hbar}}}{\pi \hbar\left(4/\sqrt{e}-1 \right)} \left\{\int_{\mathbb{R}}  \left(1-\frac{2x^2}{\hbar}-\frac{2p^2}{\hbar}  \right) e^{-\frac{p^2}{\hbar}}dp + \right. \\
\\
+ \left. 4 \int_{ \sqrt{\hbar/2-x^2}}^{\infty}  \left(\frac{2x^2}{\hbar}+\frac{2p^2}{\hbar}  -1 \right) e^{-\frac{p^2}{\hbar}}dp \right\}=\\
\\
=\frac{e^{-\frac{x^2}{\hbar}}}{\pi \hbar\left(4/\sqrt{e}-1 \right)} \left\{\frac{2x^2\sqrt{\pi \hbar}}{\hbar} + \frac{4 e^{\frac{x^2}{\hbar}}}{\sqrt{e}} \sqrt{\frac{\hbar}{2}-x^2}  + \frac{4x^2}{\sqrt{\hbar}} \Gamma \left(\frac{1}{2},\frac{1}{2}- \frac{x^2}{\hbar} \right)\right\}~,
\end{array}
\label{eqExample16}
\end{equation}
where $\Gamma (s,x)= \int_x^{\infty} t^{s-1} e^{-t} dt $ is the upper incomplete gamma function.

Due to the symmetry $x \leftrightarrow p$ in (\ref{eqExample11}) the same results apply to the momentum marginal distributions.

This shows that, while the Wigner function has the correct marginal distributions, the distribution $\mu_{\rho}$ does not in general. The exceptions are, of course, Wigner-positive functions which satisfy: $\mu_{\rho}=W \rho$.
\end{example}

\begin{example}\label{ExampleConcavity}
Let us consider a function $f\in \mathcal{S}_0 (\mathbb{R})$ such that its support is some interval $I_f$. We may consider, e.g. $f \in \mathcal{D} (I_f)$. It is well known that the support of $Wf$ is contained in the infinite strip $\Omega_f= I_f \times \mathbb{R}$.

Let us also consider a density matrix $\widehat{\eta}$ whose Wigner function $W \eta(z)$ has a support $\Omega_{\eta} \subset \mathbb{R}^2$, which we assume to be disjoint of $\Omega_f$:
\begin{equation}
\Omega_{\eta} \cap \Omega_f = \emptyset~.
\label{eqExampleConcavity1}
\end{equation}
We shall also assume that:
\begin{equation}
K:= \frac{\|Wf\|_1}{\|W  \eta\|_1} >1~.
\label{eqExampleConcavity2}
\end{equation}
Finally, we also consider the convex combination:
\begin{equation}
W \rho (z)= \frac{W f(z)+ W \eta(z)}{2}~.
\label{eqExampleConcavity3}
\end{equation}
A straightforward calculation shows that:
\begin{equation}
\|W \rho\|_1 = \frac{\|W f\|_1+ \|W \eta \|_1}{2}~.
\label{eqExampleConcavity4}
\end{equation}
As before, let $\mu_f(z)= \frac{|Wf(z)|}{\|Wf\|_1}$ and $\mu_{\eta}(z)= \frac{|W \eta(z)|}{\|W \eta \|_1}$. We thus have:
\begin{equation}
\begin{array}{c}
S\left[W \rho \right]= - \int_{\mathbb{R}^2} \frac{|W\rho(z)|}{\|W\rho\|_1} \ln \left(\frac{|W\rho(z)|}{\|W\rho\|_1}\right) dz=\\
\\
= - \int_{\Omega_f} \frac{|W f(z)|}{2\|W\rho\|_1} \ln \left(\frac{|W f(z)|}{2\|W\rho\|_1}\right) dz - \int_{\Omega_{\eta}} \frac{|W\eta(z)|}{2\|W\rho\|_1} \ln \left(\frac{|W\eta(z)|}{2\|W\rho\|_1}\right) dz=\\
\\
= - \frac{\|W f\|_1}{2\|W\rho\|_1}\int_{\Omega_f} \mu_f(z) \ln \left(\frac{\mu_f(z) \|W f\|_1}{2\|W\rho\|_1}\right) dz - \frac{\|W \eta\|_1}{2\|W\rho\|_1}\int_{\Omega_{\eta}} \mu_{\eta}(z) \ln \left(\frac{\mu_{\eta}(z)\|W\eta\|_1}{2\|W\rho\|_1}\right) dz=\\
\\
=\frac{\|W f\|_1S \left[Wf\right]+\|W \eta\|_1S \left[W\eta\right]}{2\|W\rho\|_1} -\frac{\|W f\|_1}{2\|W\rho\|_1}\ln \left(\frac{\|W f\|_1}{2\|W\rho\|_1}\right)-\frac{\|W \eta\|_1}{2\|W\rho\|_1}\ln \left(\frac{\|W \eta\|_1}{2\|W\rho\|_1}\right)~.
\end{array}
\label{eqExampleConcavity5}
\end{equation}
Next we consider the quantity:
\begin{equation}
\Sigma:=S \left[W \rho\right]-\frac{1}{2}S\left[W f\right]-\frac{1}{2} S\left[W \eta\right]~.
\label{eqExampleConcavity6}
\end{equation}
From(\ref{eqExampleConcavity4}-\ref{eqExampleConcavity6}) we obtain:
\begin{equation}
\Sigma=\Sigma_1+\Sigma_2~,
\label{eqExampleConcavity7}
\end{equation}
where
\begin{equation}
\Sigma_1= \frac{\left(\|Wf\|_1-\|W \eta\|_1 \right)}{4 \|W \rho\|_1} \left(S \left[W f\right]-S \left[W \eta\right] \right)~,
\label{eqExampleConcavity8}
\end{equation}
and
\begin{equation}
\Sigma_2= -\frac{\|W f\|_1}{2\|W\rho\|_1}\ln \left(\frac{\|W f\|_1}{2\|W\rho\|_1}\right)-\frac{\|W \eta\|_1}{2\|W\rho\|_1}\ln \left(\frac{\|W \eta\|_1}{2\|W\rho\|_1}\right)~.
\label{eqExampleConcavity9}
\end{equation}
Using the fact that $1 < \|W \eta\|_1 < \|W f\|_1$, we conclude that:
\begin{equation}
\begin{array}{c}
\Sigma_2= \ln \left(2 \|W \rho\|_1\right) - \frac{\|W f\|_1\ln \|W f\|_1 + \|W \eta\|_1\ln \|W \eta\|_1}{2\|W \rho\|_1} <\\
\\
< \ln \left(2 \|W f\|_1\right)  - \frac{\|W f\|_1\ln \|W \eta\|_1 + \|W \eta\|_1\ln \|W \eta\|_1}{2 \|W \rho\|_1}=\\
\\
= \ln \left(2 \|W f\|_1\right) -\ln \|W \eta\|_1=\ln (2K)~.
\end{array}
\label{eqExampleConcavity10}
\end{equation}
From (\ref{eqExampleConcavity7},\ref{eqExampleConcavity8},\ref{eqExampleConcavity10}) we have:
\begin{equation}
\Sigma < \frac{\left(\|Wf\|_1-\|W \eta\|_1 \right)}{4 \|W \rho\|_1} \left(S \left[W f\right]-S \left[W \eta\right] \right) + \ln (2K)~.
\label{eqExampleConcavity11}
\end{equation}
We conclude that we may obtain $\Sigma <0$, if we can find $Wf$, $W \eta$, such that:
\begin{equation}
\begin{array}{c}
\frac{\left(\|Wf\|_1-\|W \eta\|_1 \right)}{4 \|W \rho\|_1} \left(S \left[W f\right]-S \left[W \eta\right] \right) + \ln (2K) \leq 0\\
\\
\Leftrightarrow  S \left[W \eta\right]-S\left[W f\right] \geq \frac{2(K+1)}{K-1} \ln (2K)~.
\end{array}
\label{eqExampleConcavity12}
\end{equation}
We now show that this is indeed possible.

Suppose that $I_f=\left[-1,0 \right]$ and consider some function $g_0 \in \mathcal{S}_0(\mathbb{R})$ with support $I_{g_0}= \left[0,1\right]$. Its Wigner distribution $W g_0$ has support $I_{g_0} \times \mathbb{R}$. We shall also assume that $\|Wg_0\|_1 <\|Wf\|_1$. 

Next, we define:
\begin{equation} 
W \eta_n (z)= \frac{1}{n} \left(W g_0(z-z_1)+W g_0(z-z_2)+ \cdots+W g_0(z-z_n) \right)~,
\label{eqExampleConcavity13}
\end{equation} 
for some $n \geq 2$ and where $z_j=(j,0)$. Thus $W g_0(z-z_1)$ has support $\left[1,2\right] \times \mathbb{R}$, $W g_0(z-z_2)$ has support $\left[2,3\right] \times \mathbb{R}$, and so on.

After a simple calculation, we conclude that:
\begin{equation}
\|W \eta_n\|_1 = \|Wg_0\|_1~.
\label{eqExampleConcavity14}
\end{equation} 
Thus 
\begin{equation}
K=\frac{\|Wf\|_1}{\|W \eta_n\|_1} =\frac{\|Wf\|_1}{\|W g_0\|_1} >1
\label{eqExampleConcavity15}
\end{equation} 
is constant for all $n \geq 2$. Thus the right-hand side of the second inequality in (\ref{eqExampleConcavity12}) does not change with $n$. It remains to prove that, by increasing $n$, we can make $S\left[W \eta_n \right]$ as large as we wish:
\begin{equation}
\begin{array}{c}
S \left[W \eta_n \right]= - \frac{1}{n} \sum_{j=1}^n \int_{\left[j,j+1\right] 
\times \mathbb{R}} \frac{|Wg_0(z-z_j)|}{\|g_0\|_1} \ln \left(\frac{|Wg_0(z-z_j)|}{n\|g_0\|_1} \right) dz=\\
\\
=-\int_{\left[0,1\right] 
\times \mathbb{R}} \frac{|Wg_0(z)|}{\|g_0\|_1} \ln \left(\frac{|Wg_0(z)|}{n\|g_0\|_1} \right) dz =S \left[Wg_0 \right]+ \ln (n)~.
\end{array}
\label{eqExampleConcavity16}
\end{equation}

\end{example}

***************************************************************

Authors' address:

\vspace{0.5 cm}
\noindent
Grupo de F\'{\i}sica Matem\'{a}tica, Departamento de
Matem\'{a}tica, Faculdade de Ci\^{e}ncias, Universidade de Lisboa, Campo
Grande, Edif\'{\i}cio C6, 1749-016 Lisboa, Portugal

\end{document}